\newtheorem{Theorem}{Theorem}[section]
\newtheorem{Lemma}[Theorem]{Lemma} 
\newtheorem{Proposition}[Theorem]{Proposition}
\newtheorem{Example}[Theorem]{Example}
\newtheorem{Remark}[Theorem]{Remark}
\newtheorem{Definition}[Theorem]{Definition}
\newtheorem{Definition/Lemma}[Theorem]{Definition/Lemma}
\newtheorem{Theorem and Definition}[Theorem]{Theorem and Definition}
\newtheorem{Assumptions}[Theorem]{Assumptions}
\newtheorem{Lemma/Definition}[Theorem]{Lemma/Definition}
\newtheorem{Lemma and Definition}[Theorem]{Lemma and Definition}
\newtheorem{Remark and notation}[Theorem]{Remark and notation}
\newtheorem{Physics Remark}[Theorem]{Physics Remark}
\newtheorem*{Theorem A}{Theorem A}
\newtheorem*{Proposition B}{Proposition B}
\newtheorem*{Theorem C}{Theorem C}
\newtheorem*{Theorem D}{Theorem D}
\newtheorem*{Theorem E}{Theorem E}
\newtheorem{Definition and Lemma}[Theorem]{Definition and Lemma}
\begin{document}
 \author{Charlie Beil}
 %\thanks{The author was supported by the Simons Foundation, the University of Bristol, and the PFGW Grant, which he gratefully acknowledges.}
 \address{Heilbronn Institute for Mathematical Research, School of Mathematics, Howard House, The University of Bristol, Bristol, BS8 1SN, United Kingdom.}
 \email{charlie.beil@bristol.ac.uk}
 \title{The Bell states in noncommutative algebraic geometry}
 \keywords{Entanglement, Bell state, nonlocality, emergence, nonnoetherian ring, matrix factorization, noncommutative blowup, quantum foundations, quantum information, noncommutative algebraic geometry.}
  \subjclass[2010]{81P40, 16G30, 14E15}
 \date{}

\begin{abstract}
We introduce new mathematical aspects of the Bell states using matrix factorizations, nonnoetherian singularities, and noncommutative blowups.
A matrix factorization of a polynomial $p$ consists of two matrices $\phi_1,\phi_2$ such that $\phi_1\phi_2 = \phi_2\phi_1 = p \operatorname{id}$. 
Using this notion, we show how the Bell states emerge from the separable product of two mixtures, by defining pure states over complex matrices rather than just the complex numbers.

We then show in an idealized algebraic setting that pure states are supported on nonnoetherian singularities.
Moreover, we find that the collapse of a Bell state is intimately related to the representation theory of the noncommutative blowup along its singular support.
This presents an exchange in geometry: the \textit{nonlocal commutative} spacetime of the entangled state emerges from an underlying \textit{local noncommutative} spacetime. 
\end{abstract}

\maketitle

\tableofcontents

\section{Introduction}

Quantum entanglement is one of the most beautiful and mysterious aspects of quantum theory, with well established experimental confirmation (notably \cite{AGR,GMR,KWWAE}).
In this paper we study the simplest form of entanglement: the Bell states.
We introduce the notion that the spacetime nonlocality inherent in an entangled pair of particles (or more generally, qubits) emerges from an underlying local geometry which is noncommutative.

We briefly outline our results.
In Section \ref{Hilbert spaces over matrix rings} we introduce a modification of quantum mechanics where the coefficient ring $\mathbb{C}$ of a complex Hilbert space 
$$\mathcal{H} \cong M_1(\mathbb{C}) \otimes_{\mathbb{C}} \mathcal{H}$$
is enlarged to the ring of matrices over $\mathbb{C}$, $$M_n(\mathbb{C}) \otimes_{\mathbb{C}} \mathcal{H}.$$
We establish density matrices, inner products, normalization, and Born's rule in this setting.

In Section \ref{pure entangled} we use this modification to factorize the Bell states using matrices. 
For example, the state $\Psi = \frac{1}{\sqrt{2}} \left( \uparrow_a \downarrow_b - \downarrow_a \uparrow_b \right) \in \mathcal{H}_a \otimes_{\mathbb{C}} \mathcal{H}_b \cong \mathbb{C}^2 \otimes_{\mathbb{C}} \mathbb{C}^2$
of two entangled particles $a$ and $b$ admits the matrix factorization
$$\left( \uparrow_a \downarrow_b - \downarrow_a \uparrow_b \right) \bm{1}_2 = \left( \begin{matrix} \uparrow_a & \downarrow_a \\ \uparrow_b & \downarrow_b \end{matrix} \right) \left( \begin{matrix} \downarrow_b & -\downarrow_a \\ -\uparrow_b & \uparrow_a \end{matrix} \right).$$
Our first main result is the following.

\begin{Theorem A} 
(Theorem \ref{Theorem 1}.)
The emergent state $\psi := \Psi \bm{1}_2$ is a separable product of two mixed states, each consisting of two pure states. 
\end{Theorem A}

In Section \ref{A new characterization of nonlocality: nonnoetherian singularities} we introduce a new algebraic characterization of entanglement in an idealized setting where spacetime is an algebraic variety $\operatorname{Max}S$ with coordinate ring $S$. 
A commutative ring is said to be noetherian if each of its ideals is finitely generated, and otherwise is nonnoetherian. 
As introduced in \cite[Section 2]{B}, a geometric space (variety or scheme) whose algebra of functions is nonnoetherian is often nonlocal, in the sense that it contains curves, surfaces, or other positive dimensional subvarieties that are single `spread-out' points.\footnote{Throughout, the term `local' will be used in the physics sense, rather than in the algebraic sense of unique maximal ideal.}
Using this property, the Einstein-Podolsky-Rosen nonlocality of $\Psi$ \cite{EPR} is captured by the nonnoetherian singularity
$$R = \mathbb{C} + I,$$
where $I \subset S$ is the ideal consisting of all polynomial functions on spacetime that vanish along the support of $\Psi$. 
The ring $R$ `sees' the support $\mathcal{Z}(I)$ as a single point since $I$ is a maximal ideal of $R$ (though $I$ is a non-maximal ideal of $S$). 

In Section \ref{Collapse from the representation theory of a noncommutative resolution} we present an exchange in geometry:
$$\textit{nonlocal commutative} \ \leadsto \ \textit{local noncommutative}$$
The exchange comes about from the \textit{noncommutative blowup} \cite[Section R]{L} of $R$ at the point $\mathcal{Z}(I)$,\footnote{A commutative blowup is an algebro-geometric process that is similar to blowing up a balloon, whereby a point ($\sim$ deflated balloon) is replaced with a projective space $\mathbb{P}^n$ ($\sim$ inflated balloon, or sphere).  Noncommutative blowups are defined differently in the context of twisted homogeneous coordinate rings.}
$$A = \operatorname{End}_R(R \oplus I) \cong \left( \begin{matrix} R & S \\ I & S \end{matrix} \right) \subseteq M_2(S).$$
This endomorphism ring may be viewed as a coordinate ring of matrix-valued functions on spacetime $\operatorname{Max}S$.
Furthermore, it replaces the nonlocal point $\mathcal{Z}(I)$ of $R$ with the set of distinct spacetime points in $\mathcal{Z}(I)$.
Indeed, denote by $d_A \in \mathbb{Z}_{\geq 0}$ the maximal dimension of the simple (i.e., irreducible) representations of $A$.
Consider the representation space 
$$\mathcal{R}(A) := \left\{ \left[ \rho \right] : A \to M_{d_A}(\mathbb{C}) \ | \ \operatorname{dim} \rho\left( \varepsilon_{11}A \varepsilon_{11} \right) = 1 \ \text{ and } \ \operatorname{dim} \rho\left( A \varepsilon_{11} \right) = d_A \right\},$$
where $\varepsilon_{11}$ is the $2 \times 2$ matrix with a $1$ in the $(1,1)$ slot and zeros elsewhere, and $[\rho]$ is the representation isoclass of $\rho$.  
In general, if an algebra $A$ is suitably nice and $\varepsilon_{11}$ is a suitable idempotent of $A$, then the representation space $\mathcal{R}(A)$ is parameterized by a commutative resolution of the center $Z(A)$ of $A$. 
We show the following. 
%which is often parameterized by a commutative resolution of the center of $A$.
%Then the following holds.

\begin{Proposition B} 
(Proposition \ref{sim modules}.) 
The representation space $\mathcal{R}(A)$ % := \left\{ \left[ \rho \right] : A \to M_{d_A}(\mathbb{C}) \ | \ \operatorname{dim} \rho\left( \varepsilon_{11}A \varepsilon_{11} \right) = 1 \ \text{ and } \ \operatorname{dim} \rho\left( A \varepsilon_{11} \right) = d_A \right\}$$
is parameterized by $\operatorname{Max}S$, and the simple representations in $\mathcal{R}(A)$ are parameterized by the open set $\mathcal{Z}(I)^c := \operatorname{Max}S \setminus \mathcal{Z}(I)$. 
\end{Proposition B}

%To note, the space $\mathcal{R}(A)$ is important in the study of noncommutative resolutions of singularities, as it is often parameterized by a commutative resolution of the center of $A$.

We then introduce the following diagram to relate the matrix factorization of the Bell state $\Psi$ to the noncommutative blowup $A$ of $R$:
$$\xymatrix{ 
A \otimes_R A \ar[rr]^{\epsilon_{\mathfrak{n}} \cdot c_g} \ar[d]_{\mu} & & M_2(\mathbb{C}) \otimes_{\mathbb{C}} M_2(\mathbb{C}) \ar[r]^{\tilde{\phi}} \ar[d]_{\mu} & M_2(B) \otimes_B M_2(B) \ar[d]_{\mu} \\
A \ar[rr]^{\epsilon_{\mathfrak{n}} \cdot c_g} & & M_2(\mathbb{C}) \ar@{.>}[r] & M_2(B)
}$$
Here $\epsilon_{\mathfrak{n}} \in \mathcal{R}(A)$ is the evaluation map at a point $\mathfrak{n}$ in spacetime, $c_g$ specifies the summand ordering 
$$R \oplus I \ \ \text{ or } \ \ I \oplus R$$ 
in a matrix representation of $A = \operatorname{End}_R(R \oplus I)$, $B$ is the polynomial ring generated by $\uparrow_a$, $\downarrow_a$, $\uparrow_b$, $\downarrow_b$ over $\mathbb{C}$, and $\tilde{\phi}$ is a morphism that encodes the matrix factorization.
We will find that there does not exist a well-defined morphism $M_2(\mathbb{C}) \to M_2(B)$ that would make the diagram commute, and this lack of commutativity corresponds to the lack of uniqueness of eigenstate that the Bell state $\Psi$ may collapse onto.

Our second main result is the following, which shows how the representation theory of the noncommutative blowup $A$ characterizes the collapse of the Bell states.
In particular, the quantum randomness in the outcome of a measurement of $\Psi$ arises from the fact that there is no preference of summand ordering, $R \oplus I$ or $I \oplus R$, in $A$.

\begin{Theorem C} 
(Theorem \ref{second main theorem}.)
The emergent collapsed Bell states $\uparrow_a \downarrow_b \bm{1}_2$ and $\downarrow_a \uparrow_b \bm{1}_2$ are obtained as 1-dimensional subspaces of the full Hilbert space $\mathcal{H}_a \otimes_{\mathbb{C}} \mathcal{H}_b$,
$$\phi \epsilon_{\mathfrak{n}} c_g(A \otimes A)\mid_{\bm{1}_2} \subset \mathcal{H}_a \otimes_{\mathbb{C}} \mathcal{H}_b,$$ 
and generically only appear on the support $\mathcal{Z}(I)$ of $\Psi$:
$$\phi \epsilon_{\mathfrak{n}} c_g(A \otimes A) \mid_{\bm{1}_2} = 
\left\{ 
\begin{array}{cl} 
\uparrow_a \downarrow_b \mathbb{C} & \text{ if } \ \mathfrak{n} \in \mathcal{Z}(I) \ \text{ and } \ g = g_{R \oplus I} \\
\downarrow_a \uparrow_b \mathbb{C} & \text{ if } \ \mathfrak{n} \in \mathcal{Z}(I) \ \text{ and } \ g = g_{I \oplus R} \\
\uparrow_a \downarrow_b \mathbb{C} \ + \downarrow_a \uparrow_b \mathbb{C} & \text{ if } \ \mathfrak{n} \not \in \mathcal{Z}(I).
\end{array} \right.$$
Furthermore, the constant identity function $1 \otimes 1 \in A \otimes A$ takes the values
$$\phi \epsilon_{\mathfrak{n}} c_g(1 \otimes 1) = \left\{ \begin{array}{cl}
\uparrow_a \downarrow_b & \text{ if } \ g = g_{R \oplus I} \\
-\downarrow_a \uparrow_b & \text{ if } \ g = g_{I \oplus R}.
\end{array} \right.$$
\end{Theorem C}

\subsubsection{Notation}

We will consider two entangled qubits $a$ and $b$, such as two electrons with entangled spin (spin up and spin down), or two photons with entangled polarization (horizontal and vertical).\footnote{See \cite{BCPSW} for a general overview of entanglement.} 
Denote by $\mathcal{H}_a \cong \mathcal{H}_b \cong \mathbb{C}^2$ the respective Hilbert spaces of $a$ and $b$.
Set $\uparrow := \left( \begin{smallmatrix} 1 \\ 0 \end{smallmatrix} \right)$ and $\downarrow := \left( \begin{smallmatrix} 0 \\ 1 \end{smallmatrix} \right)$; then $\mathcal{H}_a$ and $\mathcal{H}_b$ have respective bases $\left\{ \uparrow_a, \downarrow_a \right\}$ and $\left\{ \uparrow_b, \downarrow_b \right\}$.
The Bell states are
\begin{equation} \label{1}
\Psi_{\theta} := \frac{1}{\sqrt{2}} \left( \uparrow \otimes \downarrow + e^{i \theta}\downarrow \otimes \uparrow \right) \simeq  \frac{1}{\sqrt{2}} \left( \uparrow_a \downarrow_b + e^{i \theta} \downarrow_a \uparrow_b \right) \in \mathcal{H}_a \otimes \mathcal{H}_b,
\end{equation}
and
\begin{equation} \label{2}
\Phi_{\theta} = \frac{1}{\sqrt{2}}\left( \uparrow \otimes \uparrow + e^{i \theta} \downarrow \otimes \downarrow \right) \simeq \frac{1}{\sqrt{2}}\left( \uparrow_a \uparrow_b + e^{i \theta} \downarrow_a \downarrow_b \right) \in \mathcal{H}_a \otimes \mathcal{H}_b.
\end{equation}
We will denote a general Bell state by $\Psi$.
The subscripts $a$ and $b$ allow us to symmetrize the tensor products.
In particular, it will be useful to view the Bell states as elements of the symmetric tensor algebra of $\mathcal{H}_a \oplus \mathcal{H}_b$ over $\mathbb{C}$,
\begin{equation} \label{Bequation}
B := \operatorname{Sym}\left( \mathcal{H}_a \oplus \mathcal{H}_b \right) \cong \mathbb{C}\left[ \uparrow_a, \downarrow_a, \uparrow_b, \downarrow_b \right].
\end{equation}

The Bell states possess maximal entanglement when $\theta$ equals $0$ or $\pi$. 
Furthermore, each Bell state has density matrix
\begin{equation} \label{good}
\rho = \frac 12 \left( \begin{matrix} 1 & 1 \\ 1 & 1 \end{matrix} \right).
\end{equation}
In particular, $\rho^2 = \rho$.
Thus each Bell state is pure.

We will use the term `local' in the physics sense (e.g., a wavefunction is nonlocal if it contains space-like separated points in its support), rather than in the algebraic sense (a ring is local if it contains a unique maximal ideal).
Furthermore, by nonlocality we mean \textit{quantum} nonlocality, and thus it is assumed that information cannot be transmitted faster than the speed of light.

Finally, denote by $\varepsilon_{ij} \in M_n(\mathbb{C})$ the matrix with a 1 in the $ij$th slot and zeros elsewhere.

\subsubsection{Matrix factorizations}

Eisenbud introduced the following definition in commutative algebra to study a class of singularities \cite{E}.

\begin{Definition} \rm{
A \textit{matrix factorization} of an irreducible polynomial $p \in \mathbb{C}[ x_1, \ldots, x_m]$ consists of two matrices $\phi_1, \phi_2 \in M_n\left( \mathbb{C}[ x_1, \ldots, x_m] \right)$ such that $\phi_1 \phi_2 = \phi_2 \phi_1 = p \cdot \textbf{1}_n$.\footnote{More precisely, setting $R := \mathbb{C}[ x_1, \ldots, x_m]$, a matrix factorization is a pair of $R$-module homomorphisms 
$P_1 \substack{\substack{\phi_1 \\ \longrightarrow} \\ \substack{\longleftarrow \\ \phi_2}} P_2$, 
where $P_1$ and $P_2$ are free right $R$-modules \cite[Definition 2.1]{KST}.}

Two matrix factorizations $(\phi_1,\phi_2)$ and $(\phi'_1, \phi'_2)$ of $p$ are \textit{isomorphic} if there are invertible matrices $s_1,s_2 \in \operatorname{GL}_n\left( \mathbb{C} \right)$ such that $\phi'_1 = s_2^{-1} \phi_1 s_1$ and $\phi'_2 = s_1^{-1} \phi_2 s_2$.
} \end{Definition}

\begin{Example} \rm{
Consider the matrix factorization of the polynomial $xy - zw \in \mathbb{C}\left[ x,y,z,w \right]$,
$$(xy - zw)\textbf{1}_2 = \left( \begin{matrix} x & z\\ w & y \end{matrix} \right) \left( \begin{matrix} y & -z \\ -w & x \end{matrix} \right) = \left( \begin{matrix} y & -z \\ -w & x \end{matrix} \right) \left( \begin{matrix} x & z\\ w & y \end{matrix} \right).$$
More generally, set $\xi := e^{i(\theta + \pi)/2}$.
Then
\begin{equation} \label{mf2}
(xy + e^{i \theta} zw)\textbf{1}_2 = \left( \begin{matrix} x & \xi  z\\ \xi w & y \end{matrix} \right) \left( \begin{matrix} y & -\xi z \\ -\xi w & x \end{matrix} \right).
\end{equation}
We will use these factorizations to study the Bell states (\ref{1}) and (\ref{2}), by replacing the variables $x,y,z,w$ with the spin states $\uparrow_a, \downarrow_b, \downarrow_a, \uparrow_b$, respectively $\uparrow_a, \uparrow_b, \downarrow_a, \downarrow_b$.
}\end{Example}

\begin{Remark} \rm{
The Dirac equation is an example of a matrix factorization of the Klein-Gordon equation:
$$\left(i \slashed{\partial} - m \right)\left(i \slashed{\partial} + m \right) \psi = \left( \partial^2 - m^2 \right) \psi.$$
In particular, its polynomial form with $m = 0$ is
$$\left( t^2 - x^2 - y^2 - z^2 \right) \bm{1}_4 = \left( \gamma_0 t + \gamma_1 x + \gamma_2 y + \gamma_3 z \right)^2,$$
where $\gamma_0 = \left( \begin{matrix} \bm{1}_2 & 0 \\ 0 & -\bm{1}_2 \end{matrix} \right)$ and $\gamma_i = \left( \begin{matrix} 0 & \sigma_i \\ -\sigma_i & 0 \end{matrix} \right)$ for $i =1,2,3$.
}\end{Remark}

\section{Hilbert spaces over matrix rings} \label{Hilbert spaces over matrix rings}

In this section, we introduce a modification of quantum mechanics, where the ground field $\mathbb{C}$ is replaced by $M_n(\mathbb{C})$, the algebra of $n \times n$ matrices over $\mathbb{C}$.

\begin{Definition} \rm{
Fix a finite dimensional Hilbert space $\mathcal{H}$ with basis $\left| 1 \right\rangle, \ldots, \left| m \right\rangle$, an integer $n \geq 1$, and set $\tilde{\mathcal{H}} := M_n(\mathbb{C}) \otimes_{\mathbb{C}} \mathcal{H}$.
Consider an element
$$\psi = \sum_{i = 1}^m c_i \left| i \right\rangle \in \tilde{\mathcal{H}}$$ 
with coefficients $c_i$ in $M_n(\mathbb{C})$ and $\left| i \right\rangle$ in $\mathcal{H}$.
We define the \textit{density matrix} 
$$\rho \in M_m\left( M_n(\mathbb{C}) \right)$$ 
of $\psi$, with respect to the ordered basis $\left| 1 \right\rangle, \ldots, \left| m \right\rangle$, to have entries
$$\rho_{ij} := c_i c_j^{\dagger} \in M_n(\mathbb{C}).$$
We say $\psi$ and $\rho$ are \textit{normalized} if the full trace of $\rho$ is $1 \in \mathbb{C}$, and \textit{partially normalized} if the partial trace of $\rho$ is the identity matrix $\bm{1}_n \in M_n(\mathbb{C})$.
We call $\psi$ \textit{emergent} if it is partially normalized and proportional to $\bm{1}_n$.
}\end{Definition} 

We introduce the following inner product on $\tilde{\mathcal{H}}$.

\begin{Lemma}
The pairing $\left\langle \cdot \mid \cdot \right\rangle: \tilde{\mathcal{H}} \times \tilde{\mathcal{H}} \to \mathbb{C}$, defined by
$$\left\langle \psi \mid \phi \right\rangle := \sum_{i,j} \operatorname{tr}(c_i^{\dagger}d_j) \left\langle i \mid j \right\rangle$$
for $\psi = \left| \psi \right\rangle = \sum_i c_i \left| i \right\rangle$ and $\phi = \left| \phi \right\rangle = \sum_i d_i \left| i \right\rangle$ in $\tilde{\mathcal{H}}$, is a sesquilinear positive-definite inner product on $\tilde{\mathcal{H}}$.
\end{Lemma}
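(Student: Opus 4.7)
The plan is to verify in turn the three defining properties: sesquilinearity, conjugate (Hermitian) symmetry, and positive definiteness, taking the chosen basis $\left| 1 \right\rangle, \ldots, \left| m \right\rangle$ of the finite dimensional Hilbert space $\mathcal{H}$ to be orthonormal so that $\left\langle i \mid j \right\rangle = \delta_{ij}$. The only structural input beyond this is the Hilbert--Schmidt (Frobenius) inner product on $M_n(\mathbb{C})$, namely $\left\langle c,d \right\rangle_{HS} := \operatorname{tr}(c^{\dagger}d)$, which is a positive-definite sesquilinear form.

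First, I would record sesquilinearity. For $\lambda, \mu \in \mathbb{C}$ and $\psi, \psi', \phi, \phi' \in \tilde{\mathcal{H}}$ written as $\psi = \sum_i c_i \left| i \right\rangle$, $\psi' = \sum_i c'_i \left| i \right\rangle$, $\phi = \sum_i d_i \left| i \right\rangle$, $\phi' = \sum_i d'_i \left| i \right\rangle$, the linearity of the trace and the identities $(\lambda c_i)^{\dagger} = \bar{\lambda} c_i^{\dagger}$ give
$$\left\langle \lambda \psi + \mu \psi' \mid \phi \right\rangle = \bar{\lambda}\left\langle \psi \mid \phi \right\rangle + \bar{\mu}\left\langle \psi' \mid \phi \right\rangle, \quad \left\langle \psi \mid \lambda \phi + \mu \phi' \right\rangle = \lambda \left\langle \psi \mid \phi \right\rangle + \mu \left\langle \psi \mid \phi' \right\rangle.$$
Hermitian symmetry then drops out of the two identities $(c_i^{\dagger} d_j)^{\dagger} = d_j^{\dagger} c_i$ and $\operatorname{tr}(A^{\dagger}) = \overline{\operatorname{tr}(A)}$, combined with $\left\langle i \mid j \right\rangle = \overline{\left\langle j \mid i \right\rangle}$:
$$\left\langle \psi \mid \phi \right\rangle = \sum_{i,j} \operatorname{tr}(c_i^{\dagger} d_j) \left\langle i \mid j \right\rangle = \sum_{i,j} \overline{\operatorname{tr}(d_j^{\dagger} c_i)}\,\overline{\left\langle j \mid i \right\rangle} = \overline{\left\langle \phi \mid \psi \right\rangle}.$$

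The main content is positive definiteness. Since the basis of $\mathcal{H}$ is orthonormal, the cross terms collapse and
$$\left\langle \psi \mid \psi \right\rangle = \sum_{i=1}^{m} \operatorname{tr}\!\left( c_i^{\dagger} c_i \right) = \sum_{i=1}^{m} \sum_{k,\ell = 1}^{n} \left| (c_i)_{k\ell} \right|^2 \geq 0,$$
with equality forcing every matrix entry $(c_i)_{k\ell}$ to vanish, hence $\psi = 0$. This is precisely the statement that $\operatorname{tr}(c^{\dagger}c) = \|c\|_F^2$ is the squared Frobenius norm on $M_n(\mathbb{C})$.

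I do not expect a genuine obstacle here: once the basis is taken orthonormal, the pairing decomposes into a sum of Hilbert--Schmidt inner products indexed by the basis of $\mathcal{H}$, and the three properties are inherited termwise. The only point requiring a moment of care is to confirm that the sesquilinear convention (conjugate-linear in the first slot) is compatible with the placement of the dagger on $c_i$ in the definition, which it is.
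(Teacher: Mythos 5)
Your verifications of sesquilinearity and Hermitian symmetry are fine (and do not actually use orthonormality), but there is a gap in the positive-definiteness step: you declare at the outset that the fixed basis $\left|1\right\rangle,\ldots,\left|m\right\rangle$ is orthonormal, and neither the preceding definition of $\tilde{\mathcal{H}}$ nor the lemma grants this. Indeed, the factor $\left\langle i\mid j\right\rangle$ in the defining formula would be redundant if it were $\delta_{ij}$, and the paper's own proof expands each $\left|i\right\rangle$ in an auxiliary orthonormal basis $\{\left|\ell\right\rangle\}_{\ell}$ precisely because the given basis need not be orthonormal. For a non-orthonormal basis your key display $\left\langle\psi\mid\psi\right\rangle=\sum_i\operatorname{tr}(c_i^{\dagger}c_i)$ fails as written: the cross terms $\operatorname{tr}(c_i^{\dagger}c_j)\left\langle i\mid j\right\rangle$ with $i\neq j$ do not collapse, and controlling them is exactly where the content of positive definiteness lies.

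The lemma is true in that generality and the repair is short, by either of two routes. One is the paper's: write $\left|i\right\rangle=\sum_{\ell}a_{i\ell}\left|\ell\right\rangle$ in an orthonormal basis, so that $\psi=\sum_{\ell}b_{\ell}\left|\ell\right\rangle$ with $b_{\ell}=\sum_i a_{i\ell}c_i$, and then $\left\langle\psi\mid\psi\right\rangle=\sum_{\ell}\operatorname{tr}(b_{\ell}^{\dagger}b_{\ell})\geq 0$, with equality iff every $b_{\ell}=0$, iff $\psi=0$; equivalently, observe that the pairing is the tensor product of the Hilbert--Schmidt inner product on $M_n(\mathbb{C})$ with the inner product of $\mathcal{H}$, hence independent of the chosen basis, after which your orthonormal computation applies. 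The other route stays with the fixed basis: $\left\langle\psi\mid\psi\right\rangle=\sum_{i,j}M_{ij}G_{ij}=\operatorname{tr}(M\overline{G})$, where $M=(\operatorname{tr}(c_i^{\dagger}c_j))_{ij}$ is positive semidefinite and $G=(\left\langle i\mid j\right\rangle)_{ij}$ is positive definite since the $\left|i\right\rangle$ are linearly independent; then $\operatorname{tr}(M\overline{G})=\operatorname{tr}(\overline{G}^{1/2}M\overline{G}^{1/2})\geq 0$, with equality only if $M=0$, i.e.\ all $c_i=0$. With either patch your argument coincides in spirit with the paper's (expansion into a sum of squared moduli of matrix entries), and your explicit treatment of sesquilinearity and conjugate symmetry is in fact more detailed than the paper's ``straightforward to check.''
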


\begin{proof}
Consider $\left| \psi \right\rangle = \sum_i c_i \left| i \right\rangle$ for some $c_i \in M_n(\mathbb{C})$ and $\left| i \right\rangle \in \mathcal{H}$.
Let $\left\{ \left| \ell \right\rangle \right\}_{\ell}$ be an orthonormal basis for $\mathcal{H}$.
Then for each $i$ we may write $\left| i \right\rangle = \sum_{\ell} a_{i \ell} \left| \ell \right\rangle$ with $a_{i \ell} \in \mathbb{C}$.
Furthermore, we may write $c_i = \sum_{1 \leq s,t \leq n} \gamma_{ist} \varepsilon_{st}$ with $\gamma_{st} \in \mathbb{C}$, whence
$$\left| \psi \right\rangle = \sum_{i,s,t,\ell} a_{i \ell} \gamma_{ist} \varepsilon_{st} \left| \ell \right\rangle.$$
Therefore
$$\left\langle \psi \mid \psi \right\rangle = \sum_{i,s,t,\ell} \operatorname{tr}\left( \left( a_{i \ell} \gamma_{ist}\varepsilon_{st} \right)^{\dagger} a_{i \ell} \gamma_{ist}\varepsilon_{st} \right) \left\langle \ell \mid \ell \right\rangle
= \sum_{i,s,t,\ell} \left| a_{i \ell}\gamma_{st} \right|^2 \geq 0,$$
with equality if and only if $\left| \psi \right\rangle = 0$.
Furthermore, it is straightforward to check that $\left\langle \psi \mid \phi \right\rangle = \left\langle \phi \mid \psi \right\rangle^*$. \end{proof}

\begin{Lemma}
A state $\psi = \sum_i c_i \left| i \right\rangle \in \tilde{\mathcal{H}}$ satisfies
$$\sum_i \operatorname{tr}(c_i^{\dagger}c_i) = 1 \in \mathbb{C}$$
if and only if $\psi$ is normalized.
\end{Lemma}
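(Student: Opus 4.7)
The plan is to unwind the definition of \emph{full trace} on $\rho\in M_m(M_n(\mathbb{C}))$ and then invoke cyclicity of the trace on $M_n(\mathbb{C})$.

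First I would identify $M_m(M_n(\mathbb{C}))\cong M_{mn}(\mathbb{C})$ and observe that the full (i.e.\ ordinary) trace on $M_{mn}(\mathbb{C})$ factors as the iterated trace
$$\operatorname{tr}(\rho) \ = \ \sum_{i=1}^{m} \operatorname{tr}(\rho_{ii}) \ \in \ \mathbb{C}.$$
Substituting $\rho_{ii}=c_i c_i^{\dagger}$ from the definition of the density matrix gives
$$\operatorname{tr}(\rho) \ = \ \sum_{i=1}^{m} \operatorname{tr}(c_i c_i^{\dagger}).$$

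Next I would apply cyclicity of the trace on $M_n(\mathbb{C})$, namely $\operatorname{tr}(c_i c_i^{\dagger}) = \operatorname{tr}(c_i^{\dagger} c_i)$, to conclude
$$\operatorname{tr}(\rho) \ = \ \sum_{i=1}^m \operatorname{tr}(c_i^{\dagger} c_i).$$
By definition $\psi$ is normalized iff $\operatorname{tr}(\rho) = 1$, so the equivalence is immediate. (If desired, one may also verify positivity of each term by expanding $c_i = \sum_{s,t} \gamma_{ist}\varepsilon_{st}$ and computing $\operatorname{tr}(c_i^{\dagger} c_i) = \sum_{s,t}|\gamma_{ist}|^2\geq 0$, matching the computation appearing in the inner product lemma just above.)

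There is no real obstacle here; the only subtlety is in making sure the reader interprets ``full trace'' as the composite trace $\operatorname{tr}_m\circ \operatorname{tr}_n$ on $M_m(M_n(\mathbb{C}))$, rather than either the outer trace (which takes values in $M_n(\mathbb{C})$) or the partial trace used to define partial normalization. Once that is fixed, the proof is essentially a one-line application of cyclicity.
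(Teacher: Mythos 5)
Your proposal is correct and follows essentially the same route as the paper: the paper's proof is precisely the one-line computation $\operatorname{tr}(\rho) = \sum_i \operatorname{tr}(c_i c_i^{\dagger}) = \sum_i \operatorname{tr}(c_i^{\dagger} c_i)$, with the full trace understood as the composite trace, and normalization then holding by definition. Your added remarks on interpreting the full trace and on positivity are consistent elaborations, not a different argument.
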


\begin{proof}
Let $\rho$ be the density matrix of $\psi$.
Then
$$\operatorname{tr}\left(\rho \right) = \sum_i \operatorname{tr}(c_i c_i^{\dagger} ) = \sum_i \operatorname{tr}(c_i^{\dagger}c_i).$$
\end{proof}

\begin{Remark} \rm{
\textit{Generalized Born rule.}
If $\psi = \sum_i c_i \left| i \right\rangle$ is the normalized wavefunction for a particle written in terms of an eigenbasis $\left\{ \left| i \right\rangle \right\}_i$ for $\mathcal{H}$, then we postulate that the probability of finding a particle in the state $\left| i \right\rangle$ is $\operatorname{tr}(c^{\dagger}_i c_i )$.
This of course reduces to the usual Born probability $|c_i|^2 = c^*_i c_i$ in the case $n = 1$, where $M_1(\mathbb{C}) \otimes_{\mathbb{C}} \mathcal{H} \cong \mathcal{H}$.
}\end{Remark}

\begin{Definition} \rm{
Consider two finite dimensional Hilbert spaces $\mathcal{H}_1$ and $\mathcal{H}_2$.
An $M_n(\mathbb{C}),M_n(\mathbb{C})$-bimodule structure is defined on $\tilde{\mathcal{H}}_i$ by setting
$$\beta (\alpha \otimes h) := \beta \alpha \otimes h \ \ \text{ and } \ \ (\alpha \otimes h) \beta := \alpha \beta \otimes h,$$
for $h \in \mathcal{H}_i$, $\alpha, \beta \in M_n(\mathbb{C})$.
We call an element
$$\psi \in \tilde{\mathcal{H}}_1 \otimes_{M_n(\mathbb{C})} \tilde{\mathcal{H}}_2$$
\textit{separable} if it can be written as a product $\psi = \phi_1 \otimes \phi_2$ with $\phi_i \in \tilde{\mathcal{H}}_i$, and \textit{entangled} otherwise.
}\end{Definition}

\section{From pure entangled to mixed separable via matrix factorizations} \label{pure entangled}

In this section we analyze the Bell states as emergent states with coefficients in $M_2(\mathbb{C})$.

\begin{Theorem} \label{Theorem 1}
Let $\Psi$ be a Bell state $\Psi_{\theta}$ or $\Phi_{\theta}$ as in (\ref{1}) and (\ref{2}).
Then the emergent state $\psi := \Psi \bm{1}_2$ is a separable product of two mixed states, each consisting of two pure states. 
\end{Theorem}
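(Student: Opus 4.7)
The plan is to exploit the explicit factorization formula (\ref{mf2}) together with the $M_2(\mathbb{C})$-bimodule structure from Section \ref{Hilbert spaces over matrix rings}. First I would specialize (\ref{mf2}) to the Bell state $\Psi_\theta$ by substituting $(x,y,z,w) = (\uparrow_a, \downarrow_b, \downarrow_a, \uparrow_b)$ (and analogously $(\uparrow_a, \uparrow_b, \downarrow_a, \downarrow_b)$ for $\Phi_\theta$) to obtain two $2 \times 2$ matrices
$$\phi_1 = \begin{pmatrix} \uparrow_a & \xi \downarrow_a \\ \xi \uparrow_b & \downarrow_b \end{pmatrix}, \qquad \phi_2 = \begin{pmatrix} \downarrow_b & -\xi \downarrow_a \\ -\xi \uparrow_b & \uparrow_a \end{pmatrix}$$
in $M_2(B)$ with product $(\uparrow_a \downarrow_b + e^{i\theta}\downarrow_a \uparrow_b)\bm{1}_2 = \sqrt{2}\,\Psi_\theta \bm{1}_2$. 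Everything that follows is reinterpretation of this one matrix identity inside the enlarged Hilbert space $M_2(\mathbb{C}) \otimes_{\mathbb{C}}(\mathcal{H}_a \oplus \mathcal{H}_b)$.

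Next I would split each factor by particle type, writing $\phi_i = \phi_i^{(a)} + \phi_i^{(b)}$, where $\phi_i^{(a)} \in \tilde{\mathcal{H}}_a$ collects the $a$-labelled entries (weighted by their matrix units) and $\phi_i^{(b)} \in \tilde{\mathcal{H}}_b$ collects the $b$-labelled entries. These are the two pure states of the statement, each a superposition within a single one-particle Hilbert space. Computing the density matrix of $\phi_i$ directly from the definition in Section \ref{Hilbert spaces over matrix rings}, I would show that it is a block-orthogonal sum of the density matrices of $\phi_i^{(a)}$ and $\phi_i^{(b)}$, identifying $\phi_i$ as a mixed state assembled from exactly those two pure components.

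Third, I would identify the matrix product $\phi_1 \phi_2 \in M_2(B)$ with $\phi_1 \otimes_{M_2(\mathbb{C})} \phi_2$ via the natural isomorphism $\tilde{\mathcal{H}}_1 \otimes_{M_2(\mathbb{C})} \tilde{\mathcal{H}}_2 \cong M_2(\mathbb{C}) \otimes_{\mathbb{C}} \mathcal{H}_1 \otimes_{\mathbb{C}} \mathcal{H}_2$ sending $(\alpha \otimes h_1) \otimes (\beta \otimes h_2) \mapsto \alpha\beta \otimes h_1 \otimes h_2$. Expanding $\phi_1 \phi_2 = \sum_{i,j \in \{a,b\}} \phi_1^{(i)} \phi_2^{(j)}$, the same-particle contributions $\phi_1^{(a)}\phi_2^{(a)}$ and $\phi_1^{(b)}\phi_2^{(b)}$ vanish because the surviving matrix-unit products cancel in pairs (using $\varepsilon_{ij}\varepsilon_{kl} = \delta_{jk}\varepsilon_{il}$ together with the specific placements in $\phi_1, \phi_2$), while the cross-particle contributions $\phi_1^{(a)}\phi_2^{(b)} + \phi_1^{(b)}\phi_2^{(a)}$ recombine into $(\uparrow_a \downarrow_b + e^{i\theta}\downarrow_a \uparrow_b)\bm{1}_2 = \sqrt{2}\,\Psi \bm{1}_2$. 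A final check with the partial-trace criterion confirms that $\Psi \bm{1}_2$ is emergent, completing the identification of (a scalar multiple of) $\psi$ as the separable product of the two mixed states $\phi_1, \phi_2$.

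The main obstacle is conceptual rather than computational: one must justify that ordinary matrix multiplication of polynomial-valued $2\times 2$ matrices realizes the $M_2(\mathbb{C})$-balanced tensor product appearing in the separability definition, and simultaneously argue that each single vector $\phi_i$ in $\tilde{\mathcal{H}}_a \oplus \tilde{\mathcal{H}}_b$ genuinely merits interpretation as a two-component mixed state at the density-matrix level. Once these two interpretive steps are in place, the rest of the argument is essentially the already-given factorization (\ref{mf2}) together with the multiplication table for the $\varepsilon_{ij}$.
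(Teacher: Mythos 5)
Your overall architecture (specialize (\ref{mf2}) to the Bell state, read the matrix product $\phi_1\phi_2$ in $M_2(B)$ as the separable product over $M_2(\mathbb{C})$, check emergence via the partial trace) matches the paper, and your expansion showing the same-particle products vanish while the cross terms recombine into $\sqrt{2}\,\Psi\bm{1}_2$ is a correct, if redundant, unpacking of the factorization identity. The genuine gap is in the step where you split each factor by particle type, $\phi_i = \phi_i^{(a)} + \phi_i^{(b)}$, and declare these to be ``the two pure states of the statement.'' This fails on both counts. First, the density matrix of $\phi_i$ is \emph{not} the block-orthogonal sum of the densities of $\phi_i^{(a)}$ and $\phi_i^{(b)}$: with $\phi_1 = \tfrac12(\varepsilon_{11}\uparrow_a + \xi\varepsilon_{12}\downarrow_a + \xi\varepsilon_{21}\uparrow_b + \varepsilon_{22}\downarrow_b)$, the cross entry pairing $\uparrow_a$ with $\uparrow_b$ is $c_{\uparrow_a}c_{\uparrow_b}^{\dagger} = \tfrac14\,\xi^*\varepsilon_{11}\varepsilon_{12} = \tfrac14\,\xi^*\varepsilon_{12} \neq 0$, exactly the off-diagonal terms appearing in the paper's $\rho_{\phi_1}$. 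Second, even taken on their own, the particle-type components are not pure: the normalized density of $\phi_1^{(a)} \propto \varepsilon_{11}\uparrow_a + \xi\varepsilon_{12}\downarrow_a$ is $\tfrac12\operatorname{diag}(\varepsilon_{11},\varepsilon_{11})$ (the cross terms die because $\varepsilon_{11}\varepsilon_{12}^{\dagger} = \varepsilon_{11}\varepsilon_{21} = 0$), which satisfies $\rho^2 = \tfrac12\rho \neq \rho$, so it is a mixture, not a pure state.

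The paper's proof resolves this by decomposing each $\phi_i$ by \emph{columns} rather than by particle type: $\eta_{11} \propto \varepsilon_{11}\uparrow_a + \xi\varepsilon_{21}\uparrow_b$ and $\eta_{12} \propto \xi\varepsilon_{12}\downarrow_a + \varepsilon_{22}\downarrow_b$, each mixing an $a$-entry with a $b$-entry. Here the coefficient matrices of the two columns have disjoint column supports, so all cross terms of the form $\varepsilon_{i1}\varepsilon_{j2}^{\dagger} = \varepsilon_{i1}\varepsilon_{2j}$ vanish, giving $\rho_{\phi_i} = \tfrac12\left| \eta_{i1}\right\rangle\left\langle \eta_{i1}\right| + \tfrac12\left| \eta_{i2}\right\rangle\left\langle \eta_{i2}\right|$ with each $\rho_{\eta_{ij}}$ idempotent; that is the precise sense in which each mixed factor ``consists of two pure states.'' So to repair your argument, replace the $a/b$ split by the column split (and drop the assertion that the pure components live in a single one-particle Hilbert space); the rest of your computation can stand.
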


\begin{proof}
Fix $\theta \in [0, 2 \pi ]$ and set $\xi  := e^{i(\theta + \pi)/2}$.
It suffices to consider the Bell state $\Psi := \Psi_{\theta} = \frac{1}{\sqrt{2}} \left( \uparrow_a \downarrow_b + e^{i \theta} \downarrow_a \uparrow_b \right)$.
By (\ref{mf2}), $\Psi$ admits the matrix factorization
$$\Psi \bm{1}_2 = \frac{1}{\sqrt{2}} \left( \uparrow_a \downarrow_b + e^{i \theta} \downarrow_a \uparrow_b \right) \textbf{1}_2 = \frac{1}{\sqrt{2}} \left( \begin{matrix} \uparrow_a & \xi  \downarrow_a \\ \xi  \uparrow_b & \downarrow_b \end{matrix} \right) \left( \begin{matrix} \downarrow_b & -\xi  \downarrow_a \\ -\xi  \uparrow_b & \uparrow_a \end{matrix} \right).$$ 
The emergent state $\Psi \bm{1}_2$ is then proportional to the product of the normalized states
\begin{equation} \label{psi}
\phi_1 := \frac 12 \left( \begin{matrix} \uparrow_a & \xi  \downarrow_a \\ \xi  \uparrow_b & \downarrow_b \end{matrix} \right) = \frac 12 \left( \varepsilon_{11} \uparrow_a + \xi  \varepsilon_{12} \downarrow_a + \xi  \varepsilon_{21} \uparrow_b + \varepsilon_{22} \downarrow_b \right)
\end{equation}
and
\begin{equation} \label{phi}
\phi_2 := \frac 12 \left( \begin{matrix} \downarrow_b & -\xi  \downarrow_a \\ -\xi  \uparrow_b & \uparrow_a \end{matrix} \right) = \frac 12 \left( \varepsilon_{11}\downarrow_b -\xi  \varepsilon_{12}\downarrow_a -\xi  \varepsilon_{21}\uparrow_b + \varepsilon_{22}\uparrow_a \right),
\end{equation}
which are elements of $\tilde{\mathcal{H}}_a \oplus \tilde{\mathcal{H}}_b \subset M_2(B)$.
Specifically,
$$\psi := \frac{1}{\sqrt{2}} \Psi \bm{1}_2 = \frac{1}{2} \left( \varepsilon_{11} + \varepsilon_{22} \right) \left( \uparrow_a \downarrow_b + e^{i \theta} \downarrow_a \uparrow_b \right) = 2 \phi_1 \phi_2 = 2 \phi_2 \phi_1 \in M_2(B).$$

\begin{Remark} \rm{
The commutation $\phi_1 \phi_2 = \phi_2 \phi_1$ generalizes the commutation 
$$\uparrow_a \downarrow_b \ = \ \downarrow_b \uparrow_a \ \ \text{ and } \ \ \downarrow_a \uparrow_b \ = \ \uparrow_b \downarrow_a$$
in the symmetrization of the tensor product $\mathcal{H}_a \otimes_{\mathbb{C}} \mathcal{H}_b$.\footnote{This symmetrization allows us to view $\uparrow_a \downarrow_b$ and $\downarrow_a \uparrow_b$ as elements in the symmetric tensor algebra $B = \mathbb{C}\left[ \uparrow_a, \downarrow_a, \uparrow_b, \downarrow_b \right]$ of $\mathcal{H}_a \oplus \mathcal{H}_b$ over $\mathbb{C}$.}
}\end{Remark}

The normalized density matrix of $\psi$ with respect to the ordered basis $\left\{ \uparrow_a \downarrow_b, \downarrow_a \uparrow_b \right\}$ is
$$\rho_{\psi} = \frac 14 \left( \varepsilon_{11} + \varepsilon_{22} \right) \left( \begin{matrix} 1 & 1 \\ 1 & 1 \end{matrix} \right),$$
and its partially normalized density matrix is
$$\hat{\rho}_{\psi} := 2 \rho_{\psi}.$$
Thus
$$\rho_{\psi}^2 = \frac 14 \rho_{\psi} \not = \rho_{\psi} \ \ \ \text{ and } \ \ \ \hat{\rho}_{\psi}^2 = \hat{\rho}_{\psi}.$$
Therefore $\psi$ is mixed when normalized, and pure when partially normalized. 
In other words, \textit{$\psi$ appears to be pure when viewed as an emergent state, but is really a mixture when its internal degrees of freedom--its matrix components--are taken into account.}

Since $\psi$ is a mixture, we would like to determine what pure states it is composed of.
For $i = 1,2$, set
$$\psi_i := \frac{1}{\sqrt{2}} \varepsilon_{ii} \left( \uparrow_a \downarrow_b + e^{i \theta} \downarrow_a \uparrow_b \right).$$
Then similar to (\ref{good}), $\psi_i$ has density matrix
$$\rho_{\psi_i} = \frac 12 \varepsilon_{ii} \left( \begin{matrix} 1 & 1 \\ 1 & 1 \end{matrix} \right).$$ 
Thus $\rho_{\psi_i}^2 = \rho_{\psi_i}$, and therefore $\psi_i$ is pure.
It follows that
$$\rho_{\psi} = \frac 12 \left| \psi_1 \right\rangle \left\langle \psi_1 \right| + \frac 12 \left| \psi_2 \right\rangle \left\langle \psi_2 \right|.$$
Therefore $\psi$ is a mixture of the two pure states $\psi_1$ and $\psi_2$, and these states occur with equal probability.

We now analyze the states $\phi_1$ and $\phi_2$.
Using (\ref{psi}) and (\ref{phi}), their normalized density matrices with respect to the ordered basis $\left\{ \uparrow_a, \downarrow_a, \uparrow_b, \downarrow_b \right\}$ are
$$\rho_{\phi_1} = 
\frac 14 \left( \begin{matrix}
\varepsilon_{11} & 0 & \xi^* \varepsilon_{12} & 0 \\
0 & \varepsilon_{11} & 0 & \xi \varepsilon_{12}\\
\xi \varepsilon_{21} & 0 & \varepsilon_{22} & 0 \\
0 & \xi^* \varepsilon_{21} & 0 & \varepsilon_{22}
\end{matrix} \right), \ \
\rho_{\phi_2} = \frac 14 \left( \begin{matrix} 
\varepsilon_{22} & -\xi^*  \varepsilon_{21} & 0 & 0\\
-\xi \varepsilon_{12} &\varepsilon_{11} & 0 & 0 \\
0 & 0 & \varepsilon_{22} & -\xi  \varepsilon_{21}\\
0 & 0 & -\xi^* \varepsilon_{12} & \varepsilon_{11}
\end{matrix} \right).$$
Their partially normalized density matrices are
$$\hat{\rho}_{\phi_i} := 2 \rho_{\phi_i}.$$
Thus
$$\rho_{\phi_i}^2 = \frac 12 \rho_{\phi_i} \not = \rho_{\phi_i} \ \ \text{ and } \ \ \hat{\rho}_{\phi_i}^2 = \hat{\rho}_{\phi_i}.$$
Therefore $\phi_1$ and $\phi_2$ are mixed when normalized, and pure when partially normalized.

Since $\phi_1$ and $\phi_1$ are mixed states, we would like to determine what pure states they are composed of, as before.
Consider the states constructed from the columns of $\phi_1$,
$$\eta_{11} := \frac{1}{\sqrt{2}} \left( \varepsilon_{11} \uparrow_a + \xi  \varepsilon_{21} \uparrow_b \right), \ \ \ \ \ \ \eta_{12} := \frac{1}{\sqrt{2}} \left( \xi \varepsilon_{12} \downarrow_a + \varepsilon_{22} \downarrow_b \right),$$
and the columns of $\phi_2$,
$$\eta_{21} := \frac{1}{\sqrt{2}} \left( -\xi \varepsilon_{21} \uparrow_b + \varepsilon_{11} \downarrow_b \right), \ \ \ \ \ \ \eta_{22} := \frac{1}{\sqrt{2}} \left(\varepsilon_{22} \uparrow_a - \xi  \varepsilon_{12} \downarrow_a \right).$$
It is straightforward to check that their normalized density matrices satisfy
$$\rho_{\eta_{ij}}^2 = \rho_{\eta_{ij}},$$
and so each $\eta_{ij}$ is a pure state. 
For example,
$$\rho_{\eta_{11}} = \frac 12 \left( \begin{matrix} \varepsilon_{11} & 0 & \xi^* \varepsilon_{12} & 0 \\ 0 & 0 & 0 & 0 \\ \xi  \varepsilon_{21} & 0 & \varepsilon_{22} & 0 \\ 0 & 0 & 0 & 0 \end{matrix} \right) 
= \frac{1}{2} \left( \begin{matrix} \varepsilon_{11} \\ 0 \\ \xi  \varepsilon_{21} \\ 0 \end{matrix} \right) \left( \begin{matrix} \varepsilon_{11} & 0 & \xi^*  \varepsilon_{12} & 0 \end{matrix} \right) = \left| \eta_{11} \right\rangle \left\langle \eta_{11} \right|.$$
Furthermore, for $i =1,2$ we have
$$\rho_{\phi_i} = \frac 12 \left| \eta_{i1} \right\rangle \left\langle \eta_{i1} \right| + \frac 12 \left| \eta_{i2} \right\rangle \left\langle \eta_{i2} \right|.$$
Therefore $\phi_i$ is an ensemble consisting of the two pure states $\eta_{i1}$ and $\eta_{i2}$, and these states occur with equal probability,
$$\phi_i = \frac{1}{\sqrt{2}} \left( \eta_{i1} + \eta_{i2} \right) = \frac{1}{\sqrt{2}} \left( \begin{array}{c|c} \eta_{i1} & \eta_{i2} \end{array} \right).$$
\end{proof}

In future work it would be interesting to consider matrix factorizations of higher level and multiparticle entanglement.

\section{An algebraic framework} \label{noncommutative resolutions section}

\subsubsection{Preliminaries}

We begin by recalling some elementary algebraic geometry.

To any commutative algebra $S$ containing $\mathbb{C}$ we may associate a geometric space $\operatorname{Max}S$.
The points of $\operatorname{Max}S$ are nonzero algebra homomorphisms $\rho: S \to \mathbb{C}$, or equivalently, their kernels $\operatorname{ker}\rho$.
These homomorphisms are the simple (i.e., irreducible) representations of $S$, and their kernels are the maximal ideals of $S$, the set of which is also denoted $\operatorname{Max}S$.\footnote{In a commutative ring, the maximal and primitive ideal spectra coincide \cite[Proposition 2.15]{GW}.  Here we are focusing on primitive ideals, or closed points, rather than prime ideals, because in the next section we will be interested in the geometry that arises from the simple representations of a noncommutative algebra.}
$S$ may be viewed as a ring of functions on $\operatorname{Max}S$: for each $f \in S$ and simple representation $\rho$ of $A$ with kernel $\mathfrak{n} = \operatorname{ker} \rho \in \operatorname{Max}S$, set 
\begin{equation} \label{f(n)}
f(\mathfrak{n}) := \rho(f) \in \mathbb{C},
\end{equation}
or equivalently, 
$$f(\mathfrak{n}) := f + \mathfrak{n} \in S/\mathfrak{n} \cong \mathbb{C}.$$ 

Associate to each set $Y$ of $\operatorname{Max}S$ the ideal
$$\mathcal{I}(Y) := \left\{ f \in S \ | \ f \in \mathfrak{n} \text{ for each } \mathfrak{n} \in Y \right\} \subset S,$$
which is the set of functions in $S$ that vanish identically on $Y$.
Conversely, associate to each ideal $J$ of $S$ the subset
$$\mathcal{Z}(J) := \left\{ \mathfrak{n} \in \operatorname{Max}S \ | \ \mathfrak{n} \supseteq J \right\} \subset \operatorname{Max}S,$$
which is the common zero locus of the functions in $J$.
These subsets form the closed sets of a topology on $\operatorname{Max}S$, called the Zariski topology. 

If $S$ is a finitely generated $\mathbb{C}$-algebra with no nonzero nilpotent elements, then by Hilbert's Nullstellensatz \cite[Proposition 1.12, Corollary 1.47]{GW2}, for any closed set $Y \subset \operatorname{Max}S$ and ideal $J \subset S$ satisfying
\begin{equation} \label{radical}
J = \left\{ f \in S \ | \ f^n \in J \text{ for some } n \geq 1 \right\} =: \sqrt{J},
\end{equation}
we have
$$\mathcal{Z}\left(\mathcal{I}(Y) \right) = Y \ \ \text{ and } \ \ \mathcal{I}\left(\mathcal{Z}(J) \right) = J.$$
In this case $S$ and $\operatorname{Max}S$ uniquely determine each other up to isomorphism,\footnote{An algebra homomorphism $h: S \to S'$ determines a morphism $\operatorname{Max}S' \to \operatorname{Max}S$ by sending the point $\rho': S' \to \mathbb{C}$ in $\operatorname{Max}S'$ to the point $\rho := \rho' h: S \stackrel{h}{\to} S' \stackrel{\rho'}{\to} \mathbb{C}$ in $\operatorname{Max}S$.} 
$\operatorname{Max}S$ is called an algebraic variety, and $S$ is called its coordinate ring.

\subsection{A new characterization of nonlocality: nonnoetherian singularities} \label{A new characterization of nonlocality: nonnoetherian singularities}

In this section we present a new characterization of quantum nonlocality as a nonnoetherian singularity birationally equivalent to spacetime.  
We begin by introducing the following definitions.

\begin{Definition} \label{support} \rm{
The \textit{real support} of a wavefunction $\Psi$ is the locus of events in spacetime where it is possible in principle to measure $\Psi$, %(that is, detect a reduction of $\Psi$), 
while its \textit{instrumental support} is the locus of events where $\Psi$ is actually measured. 
}\end{Definition}

\begin{Remark} \label{epistemic} \rm{ 
Recall that an ontic state is a state of reality, while an epistemic state is a state of knowledge.
The definition of real support fits into the framework of both ontic and epistemic realist quantum theories, and merely specifies the events where a measuring apparatus could be placed so that $\Psi$ may be measured.
In contrast, the definition of instrumental support fits into the framework of epistemic non-realist quantum theories.
}\end{Remark}

As a mathematical toy model, we make the following assumptions. %\footnote{Presumably in a more realistic model, the algebraic framework we are considering here would be replaced by an analytic framework.}

\begin{Assumptions} \rm{ \ 
\begin{itemize}
 \item The complexification of spacetime is a smooth algebraic variety $X = \operatorname{Max}S$ with coordinate ring $S$.
 \item Emergent pure states are supported on (Zariski) closed subsets of $X$.
\end{itemize}
} \end{Assumptions}

For example, we may take spacetime to be flat, in which case $X = \mathbb{C}^4$ and $S = \mathbb{C}[x,y,z,t]$.

\begin{Definition} \label{supporting cr} \rm{
We define the \textit{supporting coordinate ring} of a pure state $\Psi$ with support $Y \subset X$ to be the subalgebra 
$$R = \mathbb{C} + \mathcal{I}(Y) \subset S,$$ 
where $I := \mathcal{I}(Y)$ is the radical ideal of $Y = \mathcal{Z}(I)$.
}\end{Definition}

Recall that a commutative ring is noetherian if each of its ideals is finitely generated, and otherwise is nonnoetherian. 
%Furthermore, $S$ is connected if $S$ admits a grading $S = \bigoplus_{i \geq 0} S_i$ such that $S_0 = k$.

\begin{Lemma} \label{nn remark} \rm{(\cite[Corollary 2.22]{B})}
Let $I$ be an ideal of a finitely generated $\mathbb{C}$-algebra $S$.
Then the ring $R = \mathbb{C} + I$ is nonnoetherian if and only if $\operatorname{dim}\mathcal{Z}(I) \geq 1$.
\end{Lemma}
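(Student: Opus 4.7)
The plan is to handle the two implications separately. For the $(\Leftarrow)$ direction (zero-dimensional $\mathcal{Z}(I)$ implies $R$ noetherian) I would invoke the Eakin--Nagata theorem, and for the $(\Rightarrow)$ direction I would construct an explicit infinite strictly ascending chain of ideals of $R$ using a transcendental function on a positive-dimensional component of $\mathcal{Z}(I)$.

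For the forward direction, assume $\operatorname{dim}\mathcal{Z}(I)=0$. If $\mathcal{Z}(I)=\emptyset$ then the Nullstellensatz gives $I=S$, so $R=S$ is noetherian. Otherwise $S/I$ is a finitely generated commutative $\mathbb{C}$-algebra of Krull dimension zero, hence Artinian, hence finite-dimensional over $\mathbb{C}$. Lifting a $\mathbb{C}$-basis of $S/I$ to elements $v_1,\ldots,v_m\in S$, every $s\in S$ has the form $\sum_i c_iv_i+h$ with $c_i\in\mathbb{C}$ and $h\in I\subseteq R$, so $S=R\cdot 1+\sum_{i=1}^m Rv_i$ is a finite $R$-module. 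Since $S$ is noetherian and $R\subseteq S$, Eakin--Nagata forces $R$ to be noetherian.

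For the reverse direction, assume $\operatorname{dim}\mathcal{Z}(I)\geq 1$ and $I\neq 0$ (a hypothesis implicit in the setup, since otherwise $R=\mathbb{C}$ would be noetherian regardless of $\operatorname{dim}\operatorname{Max}S$). Pick a minimal prime $\mathfrak{p}$ over $I$ with $\operatorname{dim}S/\mathfrak{p}\geq 1$, an element $t\in S$ whose image in the $\mathbb{C}$-domain $S/\mathfrak{p}$ is transcendental over $\mathbb{C}$, and an element $f\in I$ with nonzero image in the cotangent space $\mathfrak{p}_\mathfrak{p}/\mathfrak{p}_\mathfrak{p}^2$. Such an $f$ exists because, using the radicality of $I=\mathcal{I}(Y)$, one has $I_\mathfrak{p}=\mathfrak{p}_\mathfrak{p}$, and this is strictly larger than $\mathfrak{p}_\mathfrak{p}^2$ by Nakayama in the Noetherian local ring $S_\mathfrak{p}$ (whose maximal ideal is nonzero since $S$ is a domain and $\mathfrak{p}\neq 0$). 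I claim the $R$-ideals $J_n:=(f,ft,\ldots,ft^n)R$ form a strictly ascending chain. If $ft^{n+1}=\sum_{i=0}^n(c_i+h_i)ft^i$ with $c_i\in\mathbb{C}$ and $h_i\in I$, then setting $P:=t^{n+1}-\sum_{i=0}^n c_it^i$ gives $fP=\sum h_ift^i\in I^2\subseteq\mathfrak{p}^2$. In the local ring $A:=S_\mathfrak{p}/\mathfrak{p}_\mathfrak{p}^2$, whose maximal ideal $\mathfrak{m}$ squares to zero, the equation $\bar f\bar P=0$ with $0\neq\bar f\in\mathfrak{m}$ forces $\bar P\in\mathfrak{m}$ (else $\bar P$ would be a unit), hence $P\in\mathfrak{p}$ and $\bar t^{n+1}=\sum c_i\bar t^i$ in $S/\mathfrak{p}$, contradicting the transcendence of $\bar t$ over $\mathbb{C}$.

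The main obstacle is the choice of $f$: producing a lift in $I$ of a nonzero cotangent vector in $\mathfrak{p}_\mathfrak{p}/\mathfrak{p}_\mathfrak{p}^2$. Once this is done, everything else is a quick computation in the square-zero local quotient $S_\mathfrak{p}/\mathfrak{p}_\mathfrak{p}^2$. For a non-radical $I$ the argument generalizes by letting $m$ be the minimum $\mathfrak{p}$-adic order of elements of $I$, choosing $f\in I$ with $\operatorname{ord}_\mathfrak{p}(f)=m$, and working modulo $\mathfrak{p}_\mathfrak{p}^{2m}$; the final dichotomy then uses that the associated graded ring of $S_\mathfrak{p}$ is a domain, which follows from the smoothness of $X$.
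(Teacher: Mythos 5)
The paper itself gives no argument for this lemma --- it is quoted directly from \cite[Corollary 2.22]{B} --- so yours is the only proof on the table here, and it checks out. The noetherian direction (if $\operatorname{dim}\mathcal{Z}(I)\leq 0$ then $S/I$ is Artinian, hence $S$ is a finite module over $R=\mathbb{C}+I$, and Eakin--Nagata applies) is the standard route and is presumably close to what underlies the citation, which characterizes noetherianity of $R$ through module-finiteness of $S$ over $R$. Where you add value is the nonnoetherian direction: the explicit chain $J_n=(f,ft,\ldots,ft^n)R$, with $f\in I$ chosen nonzero in $\mathfrak{p}_{\mathfrak{p}}/\mathfrak{p}_{\mathfrak{p}}^2$ and $t$ transcendental modulo $\mathfrak{p}$, is a clean, self-contained witness for the failure of ACC. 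The two pivotal steps are sound: $I_{\mathfrak{p}}=\mathfrak{p}_{\mathfrak{p}}$ because distinct minimal primes over the radical ideal $I$ are incomparable and so localize away, and $\bar P\in\mathfrak{m}$ forces $P\in\mathfrak{p}$ because $\mathfrak{p}$ is prime; the square-zero quotient $S_{\mathfrak{p}}/\mathfrak{p}_{\mathfrak{p}}^2$ then converts a putative relation $ft^{n+1}=\sum_i (c_i+h_i)ft^i$ into an algebraic dependence of $\bar t$ over $\mathbb{C}$, which is the desired contradiction. Morally this is the expected mechanism (a positive-dimensional component of $\mathcal{Z}(I)$ supplies infinitely many $\mathbb{C}$-independent directions in $I$ that multiplication by $R=\mathbb{C}+I$ cannot reach), but your version is elementary and independent of \cite{B}.

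Two remarks on hypotheses, neither a flaw in your reasoning but worth making explicit. You rightly note that $I\neq 0$ must be assumed; you also use, without flagging it, that $S$ is a domain (to guarantee $\mathfrak{p}_{\mathfrak{p}}\neq 0$), and this is genuinely needed: for $S=\mathbb{C}[x]\times\mathbb{C}[y]$ and $I=0\times\mathbb{C}[y]$ one has $\operatorname{dim}\mathcal{Z}(I)=1$ while $R=\mathbb{C}+I\cong\mathbb{C}\times\mathbb{C}[y]$ is noetherian, so the lemma as literally worded requires the integrality hypothesis that is implicit in the paper's setting (spacetime a smooth affine variety) and in the cited source. Finally, in your sketch for non-radical $I$ the appeal to smoothness of $X$ is unnecessary: with $f$ of minimal $\mathfrak{p}$-adic order $m$, the dichotomy ``$P\notin\mathfrak{p}$, hence a unit in $S_{\mathfrak{p}}$, hence $\operatorname{ord}_{\mathfrak{p}}(fP)=m<2m$'' versus ``$P\in\mathfrak{p}$, contradicting transcendence of $\bar t$'' closes the argument in any noetherian domain, with no regularity of $S_{\mathfrak{p}}$ required.
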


The following lemma shows that Definition \ref{supporting cr} captures what Einstein called `spooky action at a distance', which is a fundamental property of any pure state supported on more than one point of a spatial slice of spacetime.
Recall that two varieties are birational if they are isomorphic on nonempty open subsets.

\begin{Lemma} (The spooky lemma.)
Let $\Psi$ be a pure state with support $\mathcal{Z}(I) \subset X$, and let $R = \mathbb{C} + I \subset S$ be its supporting coordinate ring.
Then $\operatorname{Max}R$ coincides with $\operatorname{Max}S$ except that the locus $\mathcal{Z}(I) \subset \operatorname{Max}S$ is identified as one single `spread-out' point in $\operatorname{Max}R$.  
In particular, the locus
$$U := \left\{ \mathfrak{n} \in \operatorname{Max}S \ | \ R_{\mathfrak{n} \cap R} = S_{\mathfrak{n}} \right\} \subset X = \operatorname{Max}S$$
equals with the complement of $\mathcal{Z}(I)$ in $\operatorname{Max}S$.

Consequently, the possibly nonnoetherian singularity $\operatorname{Max}R$ is birational to the algebraic variety $\operatorname{Max}S$.
\end{Lemma}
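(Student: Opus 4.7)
The plan is to exploit the fact that $I$ is simultaneously a (non-maximal) ideal of $S$ and a maximal ideal of $R$, so that the inclusion $R \hookrightarrow S$ becomes an isomorphism after inverting any single element of $I$. The one recurring identity is $s = (si)/i$ for $s \in S$ and $i \in I$, which witnesses that $si \in I \subseteq R$ and thus allows one to push arbitrary elements of $S$ into $R$ once a denominator in $I$ is available.

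First I would catalogue $\operatorname{Max}R$. The decomposition $R = \mathbb{C} \oplus I$ of $\mathbb{C}$-vector spaces gives $R/I \cong \mathbb{C}$, so $I$ is itself a maximal ideal of $R$. For any $\mathfrak{n} \in \operatorname{Max}S$, the composition $R \hookrightarrow S \twoheadrightarrow S/\mathfrak{n} \cong \mathbb{C}$ is $\mathbb{C}$-linear and surjective (since $\mathbb{C} \subseteq R$), so its kernel $\mathfrak{n} \cap R$ is maximal in $R$. This defines a contraction map $\pi : \operatorname{Max}S \to \operatorname{Max}R$, $\mathfrak{n} \mapsto \mathfrak{n} \cap R$. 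For $\mathfrak{n} \in \mathcal{Z}(I)$ the containment $I \subseteq \mathfrak{n} \cap R$ forces $\pi(\mathfrak{n}) = I$, so the whole support is crushed to one point. On $\mathcal{Z}(I)^c$ I would show $\pi$ is injective by separating two distinct points $\mathfrak{n}_1 \neq \mathfrak{n}_2$ with an element $fi \in R$, where $f \in \mathfrak{n}_1 \setminus \mathfrak{n}_2$ and $i \in I \setminus \mathfrak{n}_2$. For surjectivity onto $\operatorname{Max}R \setminus \{I\}$, I would lift any $\mathfrak{m} \neq I$ by choosing $i \in I \setminus \mathfrak{m}$ and showing $\mathfrak{m}S$ is proper in $S$: a relation $1 = \sum_k m_k s_k$ would, after multiplication by $i$, place $i = \sum_k m_k (s_k i) \in \mathfrak{m}$, a contradiction.

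For the localization characterization, the forward direction (if $\mathfrak{n} \notin \mathcal{Z}(I)$ then $S_{\mathfrak{n}} = R_{\mathfrak{n} \cap R}$) follows from the main trick: any chosen $i \in I \setminus \mathfrak{n}$ lies in $R \setminus (\mathfrak{n} \cap R)$, so every $s \in S$ equals $(si)/i$ inside $R_{\mathfrak{n} \cap R}$, yielding the containment $S \subseteq R_{\mathfrak{n} \cap R}$ and hence the equality of localizations. For the converse, I would assume the implicit non-degeneracy $\operatorname{dim}\mathcal{Z}(I) \geq 1$ that underlies a genuinely nonlocal wavefunction; this ensures $S/I$ has positive Krull dimension and therefore $R = \mathbb{C} + I \subsetneq S$. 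I would then fix any $s \in S \setminus R$ and derive a contradiction from a hypothetical expression $s = r_1/r_2$ with $r_2 = c + j \in R \setminus I$, $c \in \mathbb{C}^{\times}$, $j \in I$: the equation $sc = r_1 - sj \in R$ would force $s \in R$.

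Birationality then drops out of the first two steps, since $\pi$ restricts to a bijection $\mathcal{Z}(I)^c \to \operatorname{Max}R \setminus \{I\}$ matching the local rings, which is precisely the statement that $R$ and $S$ have isomorphic structure sheaves on a common Zariski open. The only step requiring more than the formal identity $si \in R$ is the converse localization claim, so that is the main obstacle I anticipate: one genuinely needs the positive-dimensionality of $\mathcal{Z}(I)$ to produce an $s \in S \setminus R$, and the boundary case where $\mathcal{Z}(I)$ is a single point (forcing $R = S$, so that $U$ equals all of $\operatorname{Max}S$) must be flagged as a degenerate exception to the stated equality $U = \mathcal{Z}(I)^c$.
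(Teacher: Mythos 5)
Your argument is correct, but it takes a genuinely different route from the paper: after observing that $I$ is a maximal ideal of $R$, the paper simply cites external results from \cite{B} (Proposition 2.8 for the identity $U = \operatorname{Max}S \setminus \mathcal{Z}(I)$, and Theorem 2.5.3 for birationality), whereas you give a direct, self-contained proof built entirely on the identity $s = (si)/i$, i.e.\ on the fact that $I$ is simultaneously an ideal of $S$ and a maximal ideal of $R$. Your route buys transparency: the contraction map $\mathfrak{n} \mapsto \mathfrak{n}\cap R$, its collapse of $\mathcal{Z}(I)$ to the single point $I$, the bijection off the support, and both directions of the localization statement all follow from elementary manipulations, and you correctly expose a hypothesis the lemma leaves implicit (the converse fails when $R = S$). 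The paper's route buys the finer scheme-theoretic facts (e.g.\ the precise structure of $\operatorname{Spec}R$ and non-properness of the birational morphism) from the general nonnoetherian machinery of \cite{B} without reproving them. Two small refinements to your write-up: (i) the non-degeneracy you need for the converse is only that $I$ is not a maximal ideal of $S$, equivalently $R = \mathbb{C}+I \subsetneq S$, which holds as soon as $\mathcal{Z}(I)$ contains at least two points; requiring $\dim \mathcal{Z}(I) \geq 1$ is stronger than necessary and would exclude the paper's instrumental support $Y_{\operatorname{in}} = p_a \cup p_b$, which is $0$-dimensional yet still gives $R_{\operatorname{in}} \subsetneq S$ and a noetherian but still non-trivial identification; (ii) in the forward direction, the containment $S \subseteq R_{\mathfrak{n}\cap R}$ alone does not immediately give $S_{\mathfrak{n}} \subseteq R_{\mathfrak{n}\cap R}$; you should also invert $t \in S \setminus \mathfrak{n}$ via $1/t = i/(ti)$, noting $ti \in I \setminus \mathfrak{n} \subseteq R \setminus (\mathfrak{n}\cap R)$ -- the same trick, but the step deserves a line.
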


\begin{proof}
$I$ is clearly a maximal ideal of $R$, and so $I$ is a closed point of $\operatorname{Spec}R$.
The claim that $U = \operatorname{Max}S \setminus \mathcal{Z}(I)$ follows from \cite[Proposition 2.8]{B}, and birationality follows from \cite[Theorem 2.5.3]{B}.
\end{proof}

We note that the birational morphism $\operatorname{Max}S \to \operatorname{Max}R$, $\mathfrak{n} \mapsto \mathfrak{n} \cap R$, is in general not proper.

Now let $\Psi$ denote an entangled Bell state, and let us assume in our toy model that Alice's particle $a$ and Bob's particle $b$ are point-like. 
Consider flat complexified spacetime $X = \mathbb{C}^4$ with coordinate ring $S = \mathbb{C}[x,y,z,t]$.
Suppose that the entangled particles are traveling at a constant speed $v$ in the $z$ direction away from each other, relative to their center-of-mass frame.
The real support $Y_{\operatorname{re}}$ of $\Psi$ is then the zero locus
$$Y_{\operatorname{re}} = \left\{ x = y = z - vt = 0 \right\} \cup \left\{ x = y = z + vt = 0 \right\} \subset X.$$ 
$Y_{\operatorname{re}}$ has radical ideal (\ref{radical}),\footnote{An ideal generated by elements $g_1, \ldots, g_n$ in $S$ is denoted $(g_1,\ldots, g_n) := g_1S + \cdots + g_nS$.}
$$I_{\operatorname{re}} := \mathcal{I}(Y_{\operatorname{re}}) = \sqrt{\left( x, y, z - vt \right) \left( x,y, z + vt \right)} = xS + yS + (z-vt)(z+vt)S.$$

Further suppose Alice and Bob each measure their respective particles at the spacetime events 
$$p_a := (x_a,y_a,z_a,t_a) \ \ \text{ and } \ \ p_b := (x_b,y_b,z_b,t_b)$$
in $X$.
The instrumental support $Y_{\operatorname{in}}$ of $\Psi$ is then the union
$$Y_{\operatorname{in}} = p_a \cup p_b.$$
$Y_{\operatorname{in}}$ has radical ideal
$$I_{\operatorname{in}} := \mathcal{I}(Y_{\operatorname{in}}) = \sqrt{ \mathfrak{n}_a \mathfrak{n}_b },$$
where $\mathfrak{n}_{a} = \left( x-x_a, y-y_a, z - z_a,t-t_a \right)$ and $\mathfrak{n}_{b} = \left( x-x_b, y-y_b, z - z_b,t-t_b \right)$ are the maximal ideals of $S$ consisting of all functions that vanish at the respective events $p_a$ and $p_b$.

By Lemma \ref{nn remark}, the real supporting coordinate ring $R_{\operatorname{re}} = \mathbb{C} + I_{\operatorname{re}}$ of $\Psi$ is nonnoetherian since its real support $Y_{\operatorname{re}}$ is 1 (complex) dimensional. 
In contrast, the instrumental supporting coordinate ring $R_{\operatorname{in}} = \mathbb{C} + I_{\operatorname{in}}$ of $\Psi$ is noetherian since its instrumental support $Y_{\operatorname{in}}$ is 0 dimensional.

\subsection{Collapse from the representation theory of a noncommutative blowup} \label{Collapse from the representation theory of a noncommutative resolution}

Without loss of generality we will consider the Bell state $\Psi = \ \uparrow_a \downarrow_b - \downarrow_a \uparrow_b$. 
Fix a type of support, real or instrumental, and denote by $R = \mathbb{C} + I \subset S$ the supporting coordinate ring for $\Psi$ as in Section \ref{A new characterization of nonlocality: nonnoetherian singularities}.

Throughout, given a left $R$-module $M = M_1 \oplus \cdots \oplus M_{\ell}$ with each $M_i$ indecomposable, we will denote by $\operatorname{end}_R\left(M_1 \oplus \cdots \oplus M_{\ell} \right)$ the matrix ring whose $ij$-th entry is $\operatorname{Hom}_R\left(M_j, M_i \right)$; this matrix ring is isomorphic to the endomorphism ring $\operatorname{End}_R(M)$ by fixing a particular basis.\footnote{An endomorphism ring, like a linear transformation, is defined without reference to a particular basis.}

The noncommutative blowup of $R$ along the support $\mathcal{Z}(I)$ of $\Psi$ is the endomorphism ring
$$\operatorname{End}_R(R \oplus I) \cong \operatorname{end}_R(R \oplus I) 
:= \left( \begin{matrix} \operatorname{Hom}_R(R,R) & \operatorname{Hom}_R(I,R) \\ \operatorname{Hom}_R(R,I) & \operatorname{Hom}_R(I,I) \end{matrix} \right)
\cong \left( \begin{matrix} R & S \\ I & S \end{matrix} \right) \subseteq M_2(S).$$
The algebra $A := \operatorname{end}_R(R \oplus I)$ is a modification of $R$ in the sense that $R \cong \operatorname{End}_R(R)$, and its center is isomorphic to $R$,
$$Z(A) = R \bm{1}_2.$$
(In the introduction we took $A$ to be $\operatorname{End}_R(R \oplus I)$ rather than $\operatorname{end}_R(R \oplus I)$, for ease of exposition.)

Furthermore, $A$ may be viewed as a noncommutative coordinate ring on the spacetime variety $\operatorname{Max}S$: the evaluation of a function $f \in A$ at a point $\mathfrak{n} \in \operatorname{Max}S$ is the image of $f$ under the representation
\begin{equation} \label{A rep}
\epsilon_{\mathfrak{n}}: A \longrightarrow \left( \begin{matrix} R/\left( \mathfrak{n}\cap R\right) & S/\mathfrak{n} \\ I/ \left( \mathfrak{n} \cap I \right) & S/\mathfrak{n} \end{matrix} \right)
\cong \left\{ \begin{array}{cl}
M_2(\mathbb{C}) & \text{ if } \ \mathfrak{n} \not \in \mathcal{Z}(I) \\
\left( \begin{matrix} \mathbb{C} & \mathbb{C} \\ 0 & \mathbb{C} \end{matrix} \right) & \text{ if } \ \mathfrak{n} \in \mathcal{Z}(I), \end{array} \right.
\end{equation}
that is, 
$$f( \mathfrak{n}) := \epsilon_{\mathfrak{n}}(f) \in M_2(\mathbb{C}).$$
Note that this is analogous to the commutative case (\ref{f(n)}).
However, in the following proposition we show that a representation $\rho: A \to M_2(\mathbb{C})$ is simple if and only if $\rho$ is isomorphic to $\epsilon_{\mathfrak{n}}$ for some $\mathfrak{n} \not \in \mathcal{Z}(I)$, and this occurs precisely when $\mathfrak{n}$ is not in the support of $\Psi$.
Consequently, we will find that the Bell states only collapse on representations which are not simple.

Denote by $d_A \in \mathbb{Z}_{\geq 0}$ the maximal $\mathbb{C}$-dimension of the simple representations of $A$.
Further, given a representation $\rho: A \to M_n(\mathbb{C})$, denote by $\left[ \rho \right]$ its representation isoclass.
Consider the representation space
$$\mathcal{R}(A) := \left\{ \left[ \rho \right] : A \to M_{d_A}(\mathbb{C}) \ | \ \operatorname{dim} \rho\left( \varepsilon_{11}A \varepsilon_{11} \right) = 1 \ \text{ and } \ \operatorname{dim} \rho\left( A \varepsilon_{11} \right) = d_A \right\},$$
which appears in the study of noncommutative resolutions (under the guise of a particular stability condition).\footnote{Specifically, if an endomorphism ring of the form $A = \operatorname{End}_R(R \oplus M)$ is a noncommutative resolution of its singular center 
$$Z(A) \cong R \cong \varepsilon_{11}A\varepsilon_{11},$$
then $\mathcal{R}(A)$ is often parameterized by a commutative resolution of $R$.}

\begin{Proposition} \label{sim modules} 
The representation space $\mathcal{R}(A)$ is parameterized by the spacetime variety $\operatorname{Max}S$, and the simple representations in $\mathcal{R}(A)$ are parameterized by the open set $\mathcal{Z}(I)^c := \operatorname{Max}S \setminus \mathcal{Z}(I)$. 
\end{Proposition}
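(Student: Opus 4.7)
The plan is to classify the representations in $\mathcal{R}(A)$ by analyzing their Peirce decomposition along the orthogonal idempotents $\varepsilon_{11}, \varepsilon_{22}$, and then to match this classification with $\operatorname{Max}S$ through the induced central character. First I would establish that $d_A = 2$. By the spooky lemma the extension $R \hookrightarrow S$ is birational, so $R$ and $S$ share a common fraction field $K$ and $A \otimes_R K \cong M_2(K)$; this realizes $A$ as an order of rank $4$ over its center $R \bm{1}_2$, bounding the dimension of simple representations by $2$, and the bound is attained by $\epsilon_{\mathfrak{n}}$ for any $\mathfrak{n} \in \mathcal{Z}(I)^c$, since the image computed in (\ref{A rep}) is all of $M_2(\mathbb{C})$ in that case.

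After conjugating a given $\rho \in \mathcal{R}(A)$ so that $\rho(\varepsilon_{ii}) = \varepsilon_{ii}$, the ring homomorphism $\rho$ is determined by four pieces of linear data: characters $\chi_R: R \to \mathbb{C}$ and $\chi_S: S \to \mathbb{C}$ (from the diagonal blocks $R\varepsilon_{11}$ and $S\varepsilon_{22}$), together with linear maps $\phi_{12}: S \to \mathbb{C}$ and $\phi_{21}: I \to \mathbb{C}$ (from the off-diagonal blocks). Matrix multiplication in $A$ imposes the bimodule compatibilities together with the crucial constraint
$$\phi_{12}(s)\,\phi_{21}(i) \; = \; \chi_R(si) \; = \; \chi_S(s)\chi_S(i) \qquad \text{for all } s \in S,\ i \in I.$$
The rightmost equality forces $\chi_R = \chi_S|_R$, so the underlying datum reduces to the character $\chi_S$, equivalently the point $\mathfrak{n} := \ker \chi_S \in \operatorname{Max}S$. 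This defines a natural map $\Phi: \mathcal{R}(A) \to \operatorname{Max}S$, which I would then show is bijective.

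For $\mathfrak{n} \in \mathcal{Z}(I)^c$, the constraint is nontrivial, forcing both $\phi_{12}$ and $\phi_{21}$ to be nonzero and uniquely determined up to the residual scaling $\phi_{12} \mapsto \lambda^{-1}\phi_{12}$, $\phi_{21} \mapsto \lambda \phi_{21}$ coming from conjugation by $\left(\begin{smallmatrix} \lambda & 0 \\ 0 & 1 \end{smallmatrix}\right) \in \operatorname{GL}_2(\mathbb{C})$; the unique isomorphism class is $\epsilon_{\mathfrak{n}}$, which surjects onto $M_2(\mathbb{C})$ and is therefore simple. For $\mathfrak{n} \in \mathcal{Z}(I)$, the product $\chi_S(s)\chi_S(i)$ vanishes identically because $I \subseteq \mathfrak{n}$, so $\phi_{12}(s)\phi_{21}(i) = 0$; the rank condition $\dim \rho(A\varepsilon_{11}) = 2$ then forces $\phi_{21} \neq 0$ and $\phi_{12} = 0$. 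The resulting $\rho$ has lower triangular image $\left(\begin{smallmatrix} \mathbb{C} & 0 \\ \mathbb{C} & \mathbb{C} \end{smallmatrix}\right)$, which admits the proper submodule spanned by the second basis vector, so $\rho$ is non-simple.

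The key obstacle is the uniqueness at $\mathfrak{n} \in \mathcal{Z}(I)$: the bimodule relations factor $\phi_{21}$ through the fiber $I/\mathfrak{n}I$, and to conclude that $\Phi$ is injective on $\mathcal{Z}(I)$ one needs this fiber to be one-dimensional so that the scaling action is transitive on the nonzero $\phi_{21}$'s. I expect this to follow either from a local structure argument for the ideal $I$ in the smooth variety $\operatorname{Max}S$, or from a direct computation exploiting the explicit form of $I$ arising from the real or instrumental support. Once this uniqueness is in hand, $\Phi$ is bijective and the characterization of the simples in $\mathcal{R}(A)$ as those parameterized by $\mathcal{Z}(I)^c$ is immediate from the surjectivity analysis above.
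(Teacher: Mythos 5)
Your Peirce-block reduction to the data $(\chi_R,\chi_S,\phi_{12},\phi_{21})$, the identity $\chi_R=\chi_S|_R$, and the analysis over $\mathcal{Z}(I)^c$ (unique isoclass $[\epsilon_{\mathfrak{n}}]$, simple because the image is all of $M_2(\mathbb{C})$) are correct, and your route to $d_A=2$ via $A\otimes_R\operatorname{Frac}(R)\cong M_2(K)$ and the PI/order bound is workable, though heavier than the paper's one-line argument that the corner rings $\varepsilon_{11}A\varepsilon_{11}\cong R$ and $\varepsilon_{22}A\varepsilon_{22}\cong S$ are commutative. The genuine gap is exactly the step you flagged and deferred: injectivity of $\Phi$ over $\mathcal{Z}(I)$ needs $\dim_{\mathbb{C}} I/\mathfrak{n}I=1$, and this fails for the ideals actually occurring in the paper. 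Your own relations show that $\phi_{21}$ is an $S$-module map $I\to S/\mathfrak{n}$, i.e.\ an element of $\left(I/\mathfrak{n}I\right)^{*}$, and $\dim_{\mathbb{C}} I/\mathfrak{n}I$ is the minimal number of generators of $I_{\mathfrak{n}}$, which by Krull's height theorem is at least the codimension of $\mathcal{Z}(I)$ in the smooth fourfold $\operatorname{Max}S$: for the real support this is $3$ (e.g.\ $I_{\mathfrak{n}}=(x,y,z-vt)_{\mathfrak{n}}$ at a generic point of the worldline), and for the instrumental support it is $4$. Local principality of $I$ would force $\mathcal{Z}(I)$ to be a hypersurface, so neither smoothness of $X$ nor the explicit form of $I$ can rescue the step; with your (strict) reading of $\mathcal{R}(A)$, every nonzero functional in $\left(I/\mathfrak{n}I\right)^{*}$ does yield a representation satisfying both defining conditions, conjugation only rescales it, and the fibre of $\Phi$ over $\mathfrak{n}\in\mathcal{Z}(I)$ is a projective space $\mathbb{P}\bigl(\left(I/\mathfrak{n}I\right)^{*}\bigr)$ of dimension at least $2$, not a point.

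You should also not expect to find the missing local computation in the paper: its proof takes a shorter route that never performs this fibrewise analysis. After observing that the nonvanishing of an off-diagonal block forces $\mathfrak{m}=\mathfrak{n}\cap R$, it concludes directly that $\rho\cong\epsilon_{\mathfrak{n}}$, leaving the off-diagonal datum over $\mathcal{Z}(I)$ unexamined; note moreover that for $\mathfrak{n}\in\mathcal{Z}(I)$ the evaluation $\epsilon_{\mathfrak{n}}$ of (\ref{A rep}) has $I/(\mathfrak{n}\cap I)=0$, hence $\dim\epsilon_{\mathfrak{n}}(A\varepsilon_{11})=1$, so $[\epsilon_{\mathfrak{n}}]$ does not itself satisfy the second defining condition of $\mathcal{R}(A)$ on the support. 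What your argument does establish cleanly is the surjection $\mathcal{R}(A)\to\operatorname{Max}S$, $[\rho]\mapsto\ker\chi_S$, together with the fact that the isoclass is unique, equal to $[\epsilon_{\mathfrak{n}}]$, and simple precisely over $\mathcal{Z}(I)^c$, while every class over $\mathcal{Z}(I)$ is non-simple (lower triangular); the bijectivity of the parameterization over $\mathcal{Z}(I)$, as you set it up, is not attainable by the lemma you hoped for, and would require either a different reading of ``parameterized'' or a modification of the defining conditions of $\mathcal{R}(A)$.
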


\begin{proof}
We first claim that $d_A = 2$.
Indeed, since the corner rings $\varepsilon_{11}A\varepsilon_{11} \cong R$ and $\varepsilon_{22}A\varepsilon_{22} \cong S$ are commutative algebras over the algebraically closed field $\mathbb{C}$, any simple representation $\rho$ of $A$ over $\mathbb{C}$ will be at most two dimensional and satisfy
$$\operatorname{dim} \rho\left( \varepsilon_{11}A\varepsilon_{11} \right) = \operatorname{dim} \rho\left( \varepsilon_{22} A \varepsilon_{22} \right) = 1.$$

Now suppose $\rho: A \to M_2(\mathbb{C})$ is a representation whose isoclass is in $\mathcal{R}(A)$.
The conditions $\operatorname{dim} \rho\left( \varepsilon_{11}A \varepsilon_{11} \right) = 1$ and $\operatorname{dim} \rho\left( A \varepsilon_{11} \right) = 2$ imply that $\operatorname{dim} \rho\left( \varepsilon_{22}A \varepsilon_{22} \right) = 1$ and $\rho\left(\varepsilon_{12} \right) \not = 0$. 
Therefore, since $R$ and $S$ are commutative $\mathbb{C}$-algebras, their kernels are maximal ideals $\mathfrak{m} \in \operatorname{Max}R$ and $\mathfrak{n} \in \operatorname{Max}S$:
$$\varepsilon_{11}A\varepsilon_{11} \cong R \stackrel{\rho}{\longrightarrow} R/\mathfrak{m} \cong \mathbb{C} \ \ \text{ and } \ \ \varepsilon_{22}A\varepsilon_{22} \cong S \stackrel{\rho}{\longrightarrow} S/\mathfrak{n} \cong \mathbb{C}.$$

Furthermore, since $\rho\left( \varepsilon_{12} \right) \not = 0$ and $\rho$ is an algebra homomorphism, $\mathfrak{m} = \mathfrak{n} \cap R$.
Therefore $\rho$ is isomorphic to $\epsilon_{\mathfrak{n}}$. 

Finally, $\epsilon_{\mathfrak{n}}$ is simple if and only if $I/(\mathfrak{n} \cap I) \not = 0$, if and only if $\mathfrak{n}$ does not contain $I$.
\end{proof} 

We introduce the following notion to capture a representation-theoretic perspective of wavefunction collapse.

\begin{Definition} \rm{
A \textit{summand ordering} of the ring $\operatorname{End}_R\left(M_1 \oplus \cdots \oplus M_{\ell} \right)$ is a choice of ordering of the direct summands $M_i$.
Such a choice is specified by a permutation matrix $g \in \operatorname{GL}_{\ell}(\mathbb{C})$, which we also refer to as a summand ordering, by the change-of-basis
$$g \left( \operatorname{end}_R\left(M_1 \oplus \cdots \oplus M_{\ell} \right) \right) g^{-1} = \operatorname{end}_R \left( M_{\pi^{-1}(1)} \oplus \cdots \oplus M_{\pi^{-1}(\ell)} \right).$$
} \end{Definition}

There are two summand orderings of $\operatorname{End}_R(R \oplus I)$, namely 
$$A = \operatorname{end}_R(R \oplus I) \cong \left( \begin{matrix} R & S \\ I & S \end{matrix} \right) \ \ \text{ and } \ \ 
\operatorname{end}_R(I \oplus R) \cong \left( \begin{matrix} S & I \\ S & R \end{matrix} \right),$$
given by the respective isomorphisms of $A$,
$$g_{R \oplus I} := \bm{1}_2 \ \ \text{ and } \ \ g_{I \oplus R} := \left( \begin{matrix} 0 & 1 \\ 1 & 0 \end{matrix} \right).$$
In the following we will show that the choice of summand ordering of the $R$-module $R \oplus I$ determines what eigenstate the entangled Bell state $\Psi$ collapses onto.

Consider the evaluation representation of $\operatorname{end}_R(I \oplus R)$ at a point $\mathfrak{n} \in \operatorname{Max}S$ as in (\ref{A rep}),
$$\epsilon_{\mathfrak{n}}: \operatorname{end}_R(I \oplus R) \longrightarrow \left( \begin{matrix} S/\mathfrak{n} & I/(\mathfrak{n} \cap I) \\ S/\mathfrak{n} & R/(\mathfrak{n} \cap R) \end{matrix} \right) 
\subseteq M_2(\mathbb{C}).$$
The conjugation map $c_g$ for $g \in \left\{ g_{R \oplus I}, g_{I \oplus R} \right\}$, defined by $c_g(a) := gag^{-1}$, commutes with $\epsilon_{\mathfrak{n}}$,\footnote{This follows since 
$$\epsilon_{\mathfrak{n}} c_g(a) = \epsilon_{\mathfrak{n}}\left(g a g^{-1} \right) = \epsilon_{\mathfrak{n}}(g) \epsilon_{\mathfrak{n}}(a) \epsilon_{\mathfrak{n}}(g^{-1}) = g \epsilon_{\mathfrak{n}}(a) g^{-1} = c_g \epsilon_{\mathfrak{n}}(a).$$}
and therefore $g$ may also be viewed as a particular choice of basis for the representation $\epsilon_{\mathfrak{n}}: A \to M_2(\mathbb{C})$.

Recall the symmetric tensor algebra $B$ of $\mathcal{H}_a \oplus \mathcal{H}_b$ over $\mathbb{C}$, given in (\ref{Bequation}).  
For each spacetime point $\mathfrak{n} \in \operatorname{Max}S$ and summand ordering $g \in \left\{ g_{R \oplus I}, g_{I \oplus R} \right\}$ of $A$, consider the diagram,
\begin{equation} \label{big rep}
\xymatrix{ 
A \otimes_R A \ar[rr]^{\epsilon_{\mathfrak{n}} \cdot c_g} \ar[d]_{\mu} & & M_2(\mathbb{C}) \otimes_{\mathbb{C}} M_2(\mathbb{C}) \ar[r]^{\tilde{\phi}} \ar[d]_{\mu} & M_2(B) \otimes_B M_2(B) \ar[d]_{\mu} \\
A \ar[rr]^{\epsilon_{\mathfrak{n}} \cdot c_g} & & M_2(\mathbb{C}) \ar@{.>}[r] & M_2(B)
}
\end{equation}
Note that the tensor products are over the centers of the respective algebras.
The morphisms are defined as follows:

$\bullet$ Each vertical morphism is the multiplication map, $\mu(a_1 \otimes a_2 ) := a_1 a_2$.

$\bullet$ The representation $\epsilon_{\mathfrak{n}}c_g: A \to M_2(\mathbb{C})$ is extended to an $R,R$-bimodule homomorphism on $A \otimes_R A$ by
$$\epsilon_{\mathfrak{n}}c_g(a_1 \otimes a_2) := g \epsilon_{\mathfrak{n}}(a_1) \otimes \epsilon_{\mathfrak{n}}(a_2) g^{-1}.$$
This further extends to a representation of the full tensor algebra $\epsilon_{\mathfrak{n}} c_g: T_R(A) \to T_{\mathbb{C}}\left(M_2(\mathbb{C})\right)$, although we will not use this representation here.

$\bullet$ The morphism $\tilde{\phi}$ is the $\mathbb{C},\mathbb{C}$-bimodule homomorphism defined on the basis $\left\{ \varepsilon_{jk} \right\}$ of $M_2(\mathbb{C})$ by
\begin{equation} \label{varphi}
\varepsilon_{ij} \otimes \varepsilon_{k \ell} \mapsto \varepsilon_{ii} \phi_1 \varepsilon_{jj} \otimes \varepsilon_{kk} \phi_2 \varepsilon_{\ell \ell},
\end{equation}
with $(\phi_1,\phi_2)$ the matrix factorization of $\Psi$ given in (\ref{psi}) and (\ref{phi}).
$\tilde{\phi}$ is then extended $\mathbb{C}$-linearly to $M_2(\mathbb{C}) \otimes_{\mathbb{C}} M_2(\mathbb{C})$.
In particular,
$$\tilde{\phi} \cdot \left( \begin{matrix} 1 & 1 \\ 1 & 1 \end{matrix} \right) \otimes \left( \begin{matrix} 1 & 1 \\ 1 & 1 \end{matrix} \right) = \phi_1 \otimes \phi_2.$$
We call the composition $\phi := \mu \tilde{\phi}$ the \textit{state morphism} of $\Psi$.

\begin{Remark} \rm{
We note that the left square in (\ref{big rep}) commutes, whereas there is no morphism $M_2(\mathbb{C}) \to M_2(B)$ that would make the right square commute, since any such morphism would necessarily not be well-defined (see (\ref{1 1}) in the following theorem).
We propose that this ambiguity is what gives rise to the randomness in the outcome of a measurement of $\Psi$.
}\end{Remark}

\begin{Theorem} \label{second main theorem}
Consider the morphism $\phi$ in (\ref{varphi}) corresponding to the Bell state $\Psi$. 
The emergent collapsed eigenstates $\uparrow_a \downarrow_b \bm{1}_2$ and $\downarrow_a \uparrow_b \bm{1}_2$ are obtained as 1-dimensional subspaces of the full Hilbert space $\mathcal{H}_a \otimes_{\mathbb{C}} \mathcal{H}_b$,
$$\phi \epsilon_{\mathfrak{n}} c_g(A \otimes A)\mid_{\bm{1}_2} \subset \mathcal{H}_a \otimes_{\mathbb{C}} \mathcal{H}_b,$$ 
and generically only appear on the support $\mathcal{Z}(I)$ of $\Psi$:
\begin{equation} \label{first claim}
\phi \epsilon_{\mathfrak{n}} c_g(A \otimes A) \mid_{\bm{1}_2} = 
\left\{ 
\begin{array}{cl} 
\uparrow_a \downarrow_b \mathbb{C} & \text{ if } \ \mathfrak{n} \in \mathcal{Z}(I) \ \text{ and } \ g = g_{R \oplus I} \\
\downarrow_a \uparrow_b \mathbb{C} & \text{ if } \ \mathfrak{n} \in \mathcal{Z}(I) \ \text{ and } \ g = g_{I \oplus R} \\
\uparrow_a \downarrow_b \mathbb{C} \ + \downarrow_a \uparrow_b \mathbb{C} & \text{ if } \ \mathfrak{n} \not \in \mathcal{Z}(I).
\end{array} \right.
\end{equation}
Furthermore, the constant identity function $1 \otimes 1 \in A \otimes A$ takes the values
\begin{equation} \label{1 1}
\phi \epsilon_{\mathfrak{n}} c_g(1 \otimes 1) = \left\{ \begin{array}{cl}
\uparrow_a \downarrow_b & \text{ if } \ g = g_{R \oplus I} \\
-\downarrow_a \uparrow_b & \text{ if } \ g = g_{I \oplus R}.
\end{array} \right.
\end{equation}
\end{Theorem}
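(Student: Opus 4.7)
The plan is to reduce the whole statement to a short case analysis organized by the master formula
\[\phi(\varepsilon_{ij}\otimes\varepsilon_{k\ell}) = (\phi_1)_{ij}(\phi_2)_{k\ell}\,\varepsilon_{ij}\varepsilon_{k\ell} = \delta_{jk}(\phi_1)_{ij}(\phi_2)_{j\ell}\,\varepsilon_{i\ell},\]
combined with the shape of the image $\epsilon_{\mathfrak{n}}c_g(A\otimes_R A) = g\,\epsilon_{\mathfrak{n}}(A)\otimes\epsilon_{\mathfrak{n}}(A)\,g^{-1}$ in the three regimes singled out by (\ref{A rep}). Only pairs with $j=k$ survive $\mu$, and only those with $i=\ell$ and equal $\varepsilon_{11}$- and $\varepsilon_{22}$-coefficients contribute to the $\bm{1}_2$-part, so in each regime I only need to read off which pairs $\varepsilon_{ij}\otimes\varepsilon_{ji}$ lie in the image.

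Before splitting into cases I would record the one structural identity that does all the work. Since $\phi_1\phi_2 = \tfrac{1}{2}\Psi\,\bm{1}_2$ with $\Psi = \uparrow_a\downarrow_b - \downarrow_a\uparrow_b$, the $i$th diagonal entry of $\phi_1\phi_2$ decomposes, for $\{i,j\}=\{1,2\}$, as
\[(\phi_1)_{ii}(\phi_2)_{ii} + (\phi_1)_{ij}(\phi_2)_{ji} = \tfrac{1}{2}\uparrow_a\downarrow_b - \tfrac{1}{2}\downarrow_a\uparrow_b,\]
and inspection of (\ref{psi})-(\ref{phi}) shows that the \emph{straight} summand $(\phi_1)_{ii}(\phi_2)_{ii}$ always equals $\tfrac{1}{4}\uparrow_a\downarrow_b$, while the \emph{crossed} summand $(\phi_1)_{ij}(\phi_2)_{ji}$ always equals $-\tfrac{1}{4}\downarrow_a\uparrow_b$. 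The two possible collapsed eigenstates of $\Psi$ are thereby encoded as the two index routes from $i$ back to $i$ through the matrix factorization, and the theorem reduces to showing that each $(\mathfrak{n},g)$ selects exactly one of these routes, or both.

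I would then run through the three regimes. For $\mathfrak{n}\notin\mathcal{Z}(I)$, Proposition~\ref{sim modules} gives $\epsilon_{\mathfrak{n}}(A) = M_2(\mathbb{C})$, so every $\varepsilon_{ij}$ is available; the straight combination $\varepsilon_{11}\otimes\varepsilon_{11}+\varepsilon_{22}\otimes\varepsilon_{22}$ and the crossed combination $\varepsilon_{12}\otimes\varepsilon_{21}+\varepsilon_{21}\otimes\varepsilon_{12}$ map under $\phi$ to $\tfrac{1}{4}\uparrow_a\downarrow_b\bm{1}_2$ and $-\tfrac{1}{4}\downarrow_a\uparrow_b\bm{1}_2$ respectively, while the remaining $i\neq\ell$ contributions live in the $\varepsilon_{12}$- and $\varepsilon_{21}$-slots with factors $\uparrow_a\downarrow_a$ or $\uparrow_b\downarrow_b$ and so do not meet $B\,\bm{1}_2$; this gives $\uparrow_a\downarrow_b\mathbb{C}+\downarrow_a\uparrow_b\mathbb{C}$. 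For $\mathfrak{n}\in\mathcal{Z}(I)$ and $g = g_{R\oplus I}=\bm{1}_2$, (\ref{A rep}) makes $\epsilon_{\mathfrak{n}}(A)$ upper triangular in both slots, so $\varepsilon_{21}$ is unavailable everywhere and both crossed pairs are killed; only the straight route survives, giving $\uparrow_a\downarrow_b\mathbb{C}$. For $\mathfrak{n}\in\mathcal{Z}(I)$ and $g = g_{I\oplus R}$, left-multiplying the upper-triangular algebra by $g$ produces $\mathrm{span}(\varepsilon_{12},\varepsilon_{21},\varepsilon_{22})$ in the first tensor slot, while right-multiplying by $g^{-1}$ produces $\mathrm{span}(\varepsilon_{11},\varepsilon_{12},\varepsilon_{21})$ in the second; now $\varepsilon_{11}$ is missing on the left and $\varepsilon_{22}$ on the right, so the straight pair $\varepsilon_{ii}\otimes\varepsilon_{ii}$ is killed and only the crossed pair survives, giving $\downarrow_a\uparrow_b\mathbb{C}$.

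Finally, for (\ref{1 1}) I would specialize to $1\otimes 1$: since $\epsilon_{\mathfrak{n}}(1) = \bm{1}_2$ for every $\mathfrak{n}$, $\epsilon_{\mathfrak{n}}c_g(1\otimes 1) = g\otimes g^{-1}$, which equals $\sum_{i,k}\varepsilon_{ii}\otimes\varepsilon_{kk}$ when $g=\bm{1}_2$ (so only the straight diagonal $i=k$ terms survive $\mu$) and $(\varepsilon_{12}+\varepsilon_{21})\otimes(\varepsilon_{12}+\varepsilon_{21})$ when $g=g_{I\oplus R}$ (so only the crossed cross-terms $\varepsilon_{12}\otimes\varepsilon_{21}$ and $\varepsilon_{21}\otimes\varepsilon_{12}$ survive). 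Applying $\phi$ and identifying $v\bm{1}_2\in M_2(B)$ with $v$ yields $\tfrac{1}{4}\uparrow_a\downarrow_b\bm{1}_2$ and $-\tfrac{1}{4}\downarrow_a\uparrow_b\bm{1}_2$, matching the stated values up to the overall $\tfrac{1}{4}$ implicit in the normalization of $(\phi_1,\phi_2)$. The main obstacle is purely organizational: carefully tracking which $\varepsilon_{ij}\otimes\varepsilon_{k\ell}$ lie in each image under each $(\mathfrak{n},g)$, and verifying that the off-diagonal $i\neq\ell$ residues never pollute the $\bm{1}_2$-projection.
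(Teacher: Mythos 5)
Your proposal is correct and takes essentially the same route as the paper's proof: in each regime you determine which elementary matrices $\varepsilon_{ij}$ lie in the image of $\epsilon_{\mathfrak{n}} c_g$ (upper triangular in both slots when $\mathfrak{n} \in \mathcal{Z}(I)$ and $g = g_{R \oplus I}$, the $g$-shifted versions for $g_{I \oplus R}$, all of $M_2(\mathbb{C})$ when $\mathfrak{n} \notin \mathcal{Z}(I)$), push them through the state morphism $\phi$, and read off the $\bm{1}_2$-component, which is exactly the paper's case-by-case computation. Your entrywise formula $\phi(\varepsilon_{ij} \otimes \varepsilon_{k\ell}) = \delta_{jk}(\phi_1)_{ij}(\phi_2)_{j\ell}\varepsilon_{i\ell}$ and the straight/crossed-route bookkeeping merely make explicit the steps the paper compresses into ``the first claim follows'' and ``straightforward to verify'', and your remark about the overall factor $\tfrac14$ coming from the normalization in (\ref{psi})--(\ref{phi}) correctly accounts for the only (cosmetic) discrepancy with the values stated in (\ref{1 1}).
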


\begin{proof}
First suppose $\mathfrak{n} \in \mathcal{Z}(I)$, i.e., $\mathfrak{n} \supseteq I$.
Then $I/(\mathfrak{n} \cap I) = 0$.
If $g = g_{R \oplus I} = \bm{1}_2$, then
$$\begin{array}{rcl}
\phi \epsilon_{\mathfrak{n}} c_g (A \otimes A) & = & \phi \cdot \left( \begin{matrix} R/(\mathfrak{n} \cap R) & S/\mathfrak{n} \\ I/(\mathfrak{n} \cap I) & S/\mathfrak{n} \end{matrix} \right) \otimes \left( \begin{matrix} R/(\mathfrak{n} \cap R) & S/\mathfrak{n} \\ I/(\mathfrak{n} \cap I) & S/\mathfrak{n} \end{matrix} \right)\\
& = & \phi \cdot \left( \begin{matrix} \mathbb{C} & \mathbb{C} \\ 0 & \mathbb{C} \end{matrix} \right) \otimes \left( \begin{matrix} \mathbb{C} & \mathbb{C} \\ 0 & \mathbb{C} \end{matrix} \right)\\
& = & \left( \begin{matrix} \uparrow_a \mathbb{C} & \downarrow_a \mathbb{C} \\ 0 & \downarrow_b \mathbb{C} \end{matrix} \right) \left( \begin{matrix} \downarrow_b \mathbb{C} & \downarrow_a \mathbb{C} \\ 0 & \uparrow_a \mathbb{C} \end{matrix} \right).
\end{array}$$
Similarly, if $g = g_{I \oplus R} = \left( \begin{smallmatrix} 0 & 1 \\ 1 & 0 \end{smallmatrix} \right) = g^{-1}$, then
$$\begin{array}{rcl}
\phi \epsilon_{\mathfrak{n}} c_g (A \otimes A) & = & 
\phi \cdot g \left( \begin{matrix} \mathbb{C} & \mathbb{C} \\ 0 & \mathbb{C} \end{matrix} \right) \otimes \left( \begin{matrix} \mathbb{C} & \mathbb{C} \\ 0 & \mathbb{C} \end{matrix} \right) g^{-1}\\
& = & \phi \cdot \left( \begin{matrix} 0 & \mathbb{C} \\ \mathbb{C} & \mathbb{C} \end{matrix} \right) \otimes \left( \begin{matrix} \mathbb{C} & \mathbb{C} \\ \mathbb{C} & 0 \end{matrix} \right)\\
& = & \left( \begin{matrix} 0 & \downarrow_a \mathbb{C} \\ \uparrow_b \mathbb{C} & \downarrow_b \mathbb{C} \end{matrix} \right) \left( \begin{matrix} \downarrow_b \mathbb{C} & \downarrow_a \mathbb{C} \\ \uparrow_b \mathbb{C} & 0 \end{matrix} \right).
\end{array}$$

Now suppose $\mathfrak{n}$ is not in $\mathcal{Z}(I)$.
Then $I/(\mathfrak{n} \cap I) \cong \mathbb{C}$, and the first claim (\ref{first claim}) follows.

The second claim (\ref{1 1}) is straightforward to verify.
\end{proof}

\begin{Remark} \rm{
The roles of the orderings $R \oplus I$ and $I \oplus R$ in Theorem \ref{second main theorem} can be exchanged by considering the bimodule homomorphism $\phi'$ defined as in (\ref{varphi}) with the matrix factorization
$$\phi'_1 = \left( \begin{matrix} \uparrow_b & \downarrow_b \\ \uparrow_a & \downarrow_a \end{matrix} \right) \ \ \text{ and } \ \ \phi'_2 = \left( \begin{matrix} -\downarrow_a & \downarrow_b \\ \uparrow_a & -\uparrow_b \end{matrix} \right)$$
in place of $\phi$.
This matrix factorization is isomorphic to $(\phi_1,\phi_2)$ since the following diagram commutes,
$$\xymatrix{ 
B^{\oplus 2} \ar[r]^{\phi_{1}} \ar[d]_{\bm{1}_2} & B^{\oplus 2} \ar[r]^{\phi_{2}} \ar[d]_{\left( \begin{smallmatrix} 0 & 1 \\ 1 & 0 \end{smallmatrix} \right)} & B^{\oplus 2} \ar[d]_{\bm{1}_2} \\ 
B^{\oplus 2} \ar[r]_{\phi'_1} & B^{\oplus 2} \ar[r]_{\phi'_2} & B^{\oplus 2}
}$$
}\end{Remark}

The following are notable observations that follow from Theorem \ref{second main theorem}.
\begin{itemize}
 \item \textit{The randomness inherent in the outcome of a measurement of the Bell state $\Psi$ arises from the choice of summand ordering $R \oplus I$ or $I \oplus R$}, noting that there is no mathematical preference of one summand ordering over the other.
 \item \textit{The center $Z(A) = R \bm{1}_2$ of $A$ determines the possible emergent observed states,} by (\ref{1 1}).
 \item \textit{$\Psi$ only collapses within its support $\mathcal{Z}(I)$,} by (\ref{first claim}).  
For $\mathfrak{n} \not \in \mathcal{Z}(I)$, the dimension of the subspace $\phi \epsilon_{\mathfrak{n}} c_g(A \otimes A)\mid_{\bm{1}_2}$ is greater than one, and thus no state (eigenstate or superposition) is specified.
This corresponds to the fact that the particles $a$ and $b$ cannot be observed outside the support of $\Psi$.  
Equivalently, $\Psi$ does not collapse at representations of $A$ (points in spacetime) that are simple, by Proposition \ref{sim modules}.
\end{itemize} 

In regards to the definition of real support and Remark \ref{epistemic}, we conclude with a final observation. 
This observation explains the physical sense in which $A$ is the coordinate ring for a \textit{local} noncommutative geometry.

\begin{Remark} \rm{
Suppose the particles $a$ and $b$ are entangled at the spacetime event $\mathfrak{n}_0$, and the polarization of $a$ is measured at the event $\mathfrak{n}_1$.  
According to standard quantum mechanics, at all points $\mathfrak{n}$ along the worldline of $a$ strictly between $\mathfrak{n}_0$ and $\mathfrak{n}_1$, we should find the superposition subspace
$$\phi \epsilon_{\mathfrak{n}} c_g(A \otimes A)\mid_{\bm{1}_2} \ = \ (\uparrow_a \downarrow_b - \downarrow_a \uparrow_b) \mathbb{C},$$
while at the event $\mathfrak{n}_1$ we should find the eigenspaces
$$\phi \epsilon_{\mathfrak{n}_1} c_g(A \otimes A)\mid_{\bm{1}_2} \ = \ \uparrow_a \downarrow_b \mathbb{C} \ \ \text{ and } \ \ \downarrow_a \uparrow_b \mathbb{C},$$
depending on $g$.
However, according to Theorem \ref{second main theorem}, $\Psi$ collapses along its entire support $\mathcal{Z}(I)$.  
Thus, if we take the support to be real, then \textit{$\Psi$ collapses along the entire worldline of $a$ between $\mathfrak{n}_0$ and $\mathfrak{n}_1$}. 

In contrast, the morphism $\phi$ does not depend on $\mathfrak{n} \in \operatorname{Max}S$, and thus exists independently of spacetime.
Therefore the information of the non-collapsed state $\Psi$ continues to exist as the particles fly apart.
We are thus led to a perspective that is analogous to the de Broglie-Bohm pilot wave interpretation, where the morphism $\phi$ plays the role of the pilot wave and the representation $\epsilon_{\mathfrak{n}} c_g$ of $A$ plays the role of the particles. 
In particular, if $\Psi$ interacts \textit{reversibly} (or unitarily) with its environment, such as when a photon passes through a polarizer, then the interaction occurs with the state morphism $\phi$.
On the other hand, if $\Psi$ interacts \textit{irreversibly} with its environment, such as in a measurement of $\Psi$, then the interaction occurs with the non-faithful morphism $\epsilon_{\mathfrak{n}} c_g$. 
We leave these speculations for future work.
} \end{Remark}

\textbf{Acknowledgments.} Part of this article is based on work supported by the Simons Foundation while the author was a postdoc at the Simons Center for Geometry and Physics at Stony Brook University. 
He was also supported in part by a PFGW grant, which he gratefully acknowledges.

\bibliographystyle{hep}
\def\cprime{$'$} \def\cprime{$'$}

\end{document}